%\documentclass[
%aps,
%prx,
%reprint,
%superscriptaddress,
%amsfonts,
%onecolumn,
%superscriptaddress,nobibnotes,nofootinbib
%]{revtex4-2}
\documentclass[aps,prx,onecolumn,superscriptaddress,nobibnotes,nofootinbib]{revtex4-2}

% https://tex.stackexchange.com/questions/458544/revtex4-1-warnings-bibtex-jnrlst-dependency-not-reversed-set-1-and-bibtex
%\bibliographystyle{apsrev4-2}

\usepackage{silence}
\WarningFilter{caption}{Unknown document class}

\usepackage[linesnumbered, ruled, vlined]{algorithm2e}
\usepackage{amsmath, amssymb, amsthm, mathrsfs, amsfonts, dsfont}
\usepackage{bbm}
\usepackage{bm}
\usepackage{booktabs}
\usepackage{braket}
\usepackage{calc}
\usepackage[style=base]{caption}
\usepackage{enumerate}
\usepackage{enumitem}
\usepackage{hyperref}
\usepackage[margin=1in]{geometry}
\usepackage{graphicx}
\usepackage{ifthen}
\usepackage{lipsum}
\usepackage{makecell}
\usepackage{mathtools}
\usepackage{multirow}
\usepackage{nicematrix}
\usepackage{optidef}
\usepackage{orcidlink}
\usepackage{physics}
\usepackage{siunitx}
\usepackage{subfiles}
\usepackage{subcaption}
\usepackage{thm-restate}
\usepackage{tikz}
\usepackage{xparse}
\hypersetup{colorlinks=true,linkcolor=blue,citecolor=blue,urlcolor=blue}

% === Commands ===

\newcommand{\black}[1]{\textcolor{black}{#1}}

\newcommand{\Rank}[1]{\mathrm{rank}\qty(#1)}

% C++ (https://tex.stackexchange.com/questions/4302/prettiest-way-to-typeset-c-cplusplus)
\newcommand{\Cpp}{C\nolinebreak[4]\hspace{-.05em}\raisebox{.4ex}{\relsize{-3}{\textbf{++}}}}
% https://tex.stackexchange.com/questions/28836/typesetting-the-define-equals-symbol
\newcommand{\defeq}{\coloneqq}

% https://tex.stackexchange.com/questions/5502/how-to-get-a-mid-binary-relation-that-grows
\newcommand{\relmiddle}[1]{\mathrel{}\middle#1\mathrel{}}
% q binomial coefficient
\newcommand{\qBinom}[2]{\genfrac{[}{]}{0pt}{}{#1}{#2}}

% https://tex.stackexchange.com/questions/564216/newcommand-for-each-letter
\ExplSyntaxOn
\NewDocumentCommand{\definealphabet}{mmmm}{
\int_step_inline:nnn{`#3}{`#4}{
\cs_new_protected:cpx{#1 \char_generate:nn{##1}{11}}{
\exp_not:N #2{\char_generate:nn{##1}{11}}}}}
\ExplSyntaxOff

\definealphabet{bb}{\mathbb}{A}{Z}
\definealphabet{rm}{\mathrm}{A}{Z}
\definealphabet{cal}{\mathcal}{A}{Z}
% \definealphabet{scr}{\mathscr}{A}{Z}
\definealphabet{frak}{\mathfrak}{a}{z}
% \definealphabet{frak}{\mathfrak}{A}{Z}

% === Settings ===

\newtheorem{theorem}{Theorem}

\newtheorem{lemma}{Lemma}

\newtheorem{corollary}{Corollary}

% https://qiita.com/rityo_masu/items/efd44bc8f9229e014237
\allowdisplaybreaks[4]

\usetikzlibrary{
  calc,
  math,
  matrix,
  patterns,
  backgrounds,
  arrows.meta,
}

% This declares a command \Comment
% The argument will be surrounded by /* ... */
% https://ja.overleaf.com/learn/latex/Algorithms
\SetKwComment{Comment}{/* }{ */}

\DontPrintSemicolon

\newboolean{isMain}
\setboolean{isMain}{true}

% --------------------  TITLE  --------------------
\begin{document}
\title{Faster Computation of Nonstabilizerness}

% ------------  AUTHORS AND AFFILIATIONS ----------
\author{Hiroki Hamaguchi\,\orcidlink{0009-0005-7348-1356}}
\email{hamaguchi-hiroki0510@g.ecc.u-tokyo.ac.jp}
\affiliation{Graduate School of Information Science and Technology, University of Tokyo, Tokyo, 7-3-1 Hongo, Bunkyo-ku, Tokyo 113-8656, Japan}

\author{Kou Hamada\,\orcidlink{0009-0005-9046-9818}}
\email{zkouaaa@g.ecc.u-tokyo.ac.jp}
\affiliation{Graduate School of Information Science and Technology, University of Tokyo, Tokyo, 7-3-1 Hongo, Bunkyo-ku, Tokyo 113-8656, Japan}

\author{Naoki Marumo\,\orcidlink{0000-0002-7372-4275}}
\affiliation{Graduate School of Information Science and Technology, University of Tokyo, Tokyo, 7-3-1 Hongo, Bunkyo-ku, Tokyo 113-8656, Japan}

\author{Nobuyuki Yoshioka\,\orcidlink{0000-0001-6094-8635}}
\email{nyoshioka@ap.t.u-tokyo.ac.jp}
\affiliation{Department of Applied Physics, University of Tokyo, 7-3-1 Hongo, Bunkyo-ku, Tokyo 113-8656, Japan}
\affiliation{Theoretical Quantum Physics Laboratory, RIKEN Cluster for Pioneering Research (CPR), Wako-shi, Saitama 351-0198, Japan}
\affiliation{JST, PRESTO, 4-1-8 Honcho, Kawaguchi, Saitama, 332-0012, Japan}

% --------------------  ABSTRACT  --------------------

\begin{abstract}
    The characterization of nonstabilizerness is fruitful due to its application in gate synthesis and classical simulation.
    In particular, the resource monotone called the \textit{stabilizer extent} is a useful tool to estimate the simulation cost using rank-based simulators, one of the state-of-the-art simulators of Clifford+$T$ circuits.
    In this work, we propose \black{faster} numerical algorithms to compute the stabilizer extent.
    Our algorithm utilizes the Column Generation method, which iteratively updates the subset of pure stabilizer states used for calculation.
    This subset is selected based on the overlaps between all stabilizer states and a target state.
    In order to update the subset, we make use of a newly proposed subroutine for calculating the \textit{stabilizer fidelity} that (i) achieves linear time complexity with respect to the number of stabilizer states, (ii) super-exponentially reduces the space complexity by in-place calculation, and (iii) prunes unnecessary states for the computation.
    As a result, our algorithm can compute the stabilizer fidelity and the stabilizer extent for Haar random pure states up to $n=9$ qubits, which naively requires a memory of $\SI{305}{\exbi\byte}$.
    We further show that our algorithm runs faster when the target state vector is real. We prove that the problem size is reduced by $\order{2^n}$ compared to the general cases, which makes it computable for the case of $n=10$ qubits.
\end{abstract}
\maketitle

%\onecolumngrid
\section{Introduction}

In the domain of universal fault-tolerant quantum computation, elementary gates are often formulated to include both classically simulatable gates and more resource-intensive gates, as exemplified by the prominent Clifford+$T$ formalism of the magic state model~\cite{gottesmanHeisenbergRepresentationQuantum1998, nielsenQuantumComputationQuantum2010, bravyiUniversalQuantumComputation2005, litinskiGameSurfaceCodes2019, horsmanSurfaceCodeQuantum2012, fowler2019lowoverheadquantumcomputation}. Since Clifford circuits are classically simulatable~\cite{gottesmanHeisenbergRepresentationQuantum1998}, non-Clifford gates are essential for achieving quantum advantage~\cite{gidneyHowFactor20482021, leeEvenMoreEfficient2021, vonburgQuantumComputingEnhanced2021, yoshiokaHuntingQuantumclassicalCrossover2023}, and naturally, it is crucial to improve and characterize classical simulation algorithms to quantitatively understand the computational speedups in quantum circuits~\cite{bravyiTradingClassicalQuantum2016, bravyi2016improved, Bravyi2019simulationofquantum,tirritoQuantifyingNonstabilizernessEntanglement2023, hahnQuantifyingQubitMagic2022, PhysRevLett.132.240602, seddonQuantifyingQuantumSpeedups2021, liuManybodyQuantumMagic2022, leoneStabilizerEnyiEntropy2022}, as well as to understand many-body phenomena in near-Clifford circuits~\cite{zhou2020single, white2021conformal, sewell2022manaa, oliviero2022magicstatea}.

\black{
    One of the earliest efforts to quantify the simulation cost was based on the negativity~\cite{pashayanEstimatingOutcomeProbabilities2015}. Inspired by the quantum Monte Carlo techniques heavily used in quantum physics simulations~\cite{ceperley1995path, foulkes2001quantum},  Ref.~\cite{pashayanEstimatingOutcomeProbabilities2015} proposed to utilize the quasiprobability decomposition---obtained from the Wigner function for qudits with odd prime dimension---to estimate the expectation value of a physical observable.
    Later, Ref.~\cite{PhysRevLett.118.090501} presented another quasiprobability-based simulator for qubits, with costs quantified by a magic monotone known as the Robustness of Magic (RoM).
}

Alongside these {\it quasiprobability-based simulators}~\cite{pashayanEstimatingOutcomeProbabilities2015, PhysRevLett.118.090501, seddonQuantifyingMagicMultiqubit2019, seddonQuantifyingQuantumSpeedups2021}, another class of simulators, {\it rank-based simulators}~\cite{bravyiTradingClassicalQuantum2016,bravyi2016improved,Bravyi2019simulationofquantum, heimendahlStabilizerExtentNot2021, seddonQuantifyingQuantumSpeedups2021}, has emerged as a compelling alternative. A striking difference is that a rank-based simulator allows strong simulation; it can not only sample but also compute the value of amplitudes. These algorithms also exhibit quadratically improved complexity for expectation-value simulation over quasiprobability-based simulators~\cite{bravyi2016improved, seddonQuantifyingQuantumSpeedups2021}.
The simulation cost of rank-based simulators is precisely given by {\it stabilizer rank}~\cite{bravyiTradingClassicalQuantum2016}, which was later shown to be connected to another monotone called the {\it stabilizer extent}. Specifically, for approximate simulations with improved scaling relative to the original rank simulators, the cost is instead bounded by the stabilizer extent.

The definitions of monotones, such as the RoM and stabilizer extent, are given in practice by a convex optimization problem. However, the exact evaluation of these quantities is exceedingly challenging due to the super-exponential number of states involved.
Among these optimization problems, the authors showed that the Robustness of Magic (RoM), formulated via Linear Program (LP), can be computed for systems up to 8 qubits by combining the Column Generation (CG) method and \black{faster} overlap calculations~\cite{Hamaguchi2024handbookquantifying}.
Naturally, this raises the question of whether the computation of the stabilizer extent can also be accelerated.

While the formalisms may seem to resemble each other at a glance, a faster calculation in practice is not straightforward for the following reasons.
(i) The problem formulations differ.
While RoM exploits the density matrix of the target state, the stabilizer extent relies on the state vector. Thus, numerical techniques such as the Fast Walsh--Hadamard Transform, used in RoM computations, cannot be applied to the stabilizer extent calculation. (ii) The optimization problem is classified as a more challenging class of the Second-Order Cone Program (SOCP), reflecting that variables are complex rather than real.
Because of these complications, whether the stabilizer extent can be computed for larger qubit counts has remained unclear.

In this work, we show that the above-raised issues can be solved so that the computation of the stabilizer extent can be accelerated even further than the one for the RoM.
We utilize a canonical form of pure stabilizer states that allows us to enumerate them and perform \black{faster} overlap computations. When we search for the subset of stabilizer states during the CG method, we can prune unnecessary states for the calculation.
We numerically demonstrate that our proposed algorithm allows us to compute the stabilizer extent of a Haar random pure state up to $n=9$ qubits, which naively requires a memory of $\SI{305}{\exbi\byte}.$
\black{Furthermore, we show that real-amplitude state vectors, comprising GHZ (Greenberger--Horne--Zeilinger) or W states and eigenstates of physically meaningful Hamiltonians, can be computed with even less computational cost.}
Concretely, the size of the relevant stabilizer states is reduced by $\order{2^n}$ to compute the stabilizer extent of a random state up to $n=10$ qubits.

The remainder of this paper is organized as follows.
In Section~\ref{sec:preliminary}, we present the preliminaries on the formalism of the stabilizer extent.
In Section~\ref{sec:scalingUp}, we first introduce how to calculate the overlaps between all stabilizer states and a target state in Theorem~\ref{thm:complexityStabilizerOverlap}, which serves as the main subroutine for our algorithm.
Utilizing these overlap values, our proposed algorithm can compute the stabilizer extent up to 9-qubit states with drastically reduced computational resources.
In Section~\ref{sec:restrictedRealProblem}, we discuss cases where the target state possesses specific properties, mainly when the state vector is real.
We demonstrate that we can compute the stabilizer extent for up to 10-qubit states with real amplitudes.
Finally, in Section~\ref{sec:discussion}, we present our findings and offer future perspectives on our work.

\begin{table}[b]
    \caption{
        The size of $\calS_n$, the data size of $A_n$ in sparse matrix format~\cite{scipyScipySparseCsc_matrix}, and the time to compute the stabilizer extent for Haar random states by the naive method or our proposed method in Section~\ref{subsec:CG}.
        The~last~row shows the results for the case in Section~\ref{sec:restrictedRealProblem}.
    }
    \label{table:sizeOfCalSn}
    \centering
    \begin{tabular}{c|ccccccc}
        \toprule
        n             & 5                     & 6                    & 7                    & 8                    & 9                    & 10                 \\
        \midrule
        $|\calS_n|$   & 2.42e+06              & 3.15e+08             & 8.13e+10             & 4.18e+13             & 4.29e+16             & 8.79e+19           \\
        size of $A_n$ & \SI{1011}{\mebi\byte} & \SI{254}{\gibi\byte} & \SI{153}{\tebi\byte} & \SI{153}{\pebi\byte} & \SI{305}{\exbi\byte} & \SI{1}{\yobi\byte} \\
        naive         & \SI{7.7}{\minute}     & $\times$             & $\times$             & $\times$             & $\times$             & $\times$           \\
        proposed      & \SI{1.4}{\second}     & \SI{2.5}{\second}    & \SI{9.2}{\second}    & \SI{4.3}{\minute}    & \SI{7.7}{\hour}      & $\times$           \\
        real case     & \SI{0.5}{\second}     & \SI{0.5}{\second}    & \SI{3.5}{\second}    & \SI{14.7}{\second}   & \SI{6.9}{\minute}    & \SI{4.7}{\hour}    \\
        \bottomrule
    \end{tabular}
\end{table}

\section{Preliminaries}
\label{sec:preliminary}

Let $\calS_n$ be the entire set of $n$-qubit stabilizer states.
For $\ket{\phi_j} \in \calS_n$, we define the density matrix of $\ket{\phi_j}$ as $\sigma_j \defeq \ketbra{\phi_j}{\phi_j}$.
The size of $\calS_n$ scales super-exponentially as $\abs{\calS_n} = 2^n \prod_{k=1}^{n} (2^k + 1)= 2^{\order{n^2}}$ \cite[Proposition 1]{PhysRevA.70.052328}.
See also TABLE~\ref{table:sizeOfCalSn} for the size of $\calS_n$.

It is informative to provide the definition of the \textit{Robustness of Magic} (RoM), which quantifies the nonstabilizerness of an $n$-qubit mixed state $\rho$ as follows~\cite{PhysRevLett.118.090501}:
\begin{equation*}
    \calR(\rho) \defeq \min_{c \in \bbR^{\abs{\calS_n}}}
    \left\{ \norm{c}_1 \relmiddle| \rho = \sum_{j=1}^{\abs{\calS_n}} c_j \sigma_j \right\}.
\end{equation*}
On the other hand, the \textit{stabilizer extent} quantifies an $n$-qubit pure state $\ket{\psi}$ as follows~\cite[Definition 3]{Bravyi2019simulationofquantum}:
\begin{equation}\label{eq:stabilizerExtentPrimalOrig}
    \xi(\psi) \defeq \min_{c\in \bbC^{\abs{\calS_n}}}
    \left\{ \norm{c}_1^2 \relmiddle| \ket{\psi} = \sum_{j=1}^{\abs{\calS_n}} c_j \ket{\phi_j} \right\}.
\end{equation}
This definition can be simplified as the complex $L^1$-norm minimization problem as follows:
\begin{equation}\label{eq:stabilizerExtentPrimal}
    \sqrt{\xi(\psi)} = \min_{x \in \bbC^{\abs{\calS_n}}}
    \left\{ \norm{x}_1 \relmiddle| A_n x = b \right\},
\end{equation}
where $A_n \in \bbC^{2^n \times \abs{\calS_n}}$ and $b \in \bbC^{2^n}$ are defined as $(A_n)_{i,j} \defeq \braket{i}{\phi_j}$ and $b_i \defeq \braket{i}{\psi}$ using the computational basis $\qty{\ket{i}}_{i=0}^{2^n-1}$.
The problem \eqref{eq:stabilizerExtentPrimal} is pointed out to be a Second-Order Cone Program (SOCP)~\cite{heimendahlStabilizerExtentNot2021}. Thus, by defining $\calA_n$ as the column set $\{a_j\}_{j=1}^{\abs{\calS_n}}$ \black{where $a_j$ is the $j$-th column vector of $A_n$}, its dual problem can be derived as~\cite[Appendix A]{heimendahlStabilizerExtentNot2021}\cite[Section 5.1.6]{boydConvexOptimization2004}
\begin{equation}\label{eq:stabilizerExtentDual}
    \sqrt{\xi(\psi)} = \max_{y \in \bbC^{2^n}} \left\{ \Re(b^\dagger y) \relmiddle| \abs{a_j^\dagger y} \leq 1
    \text{ for all $a_j \in \calA_n$} \right\},
\end{equation}
where ${}^\dagger$ denotes the conjugate transpose, or in the case of a scalar, the complex conjugate.

Further, we introduce a function $\texttt{SolveSOCP}(\calC, b)$ to describe our algorithm in later sections with the symbol $\calC \subseteq \calA_n$ representing a column subset.
The function solves a problem that can be obtained by restricting $\calA_n$ to $\calC$, and returns the solution $x$ for the corresponding restricted primal problem~\eqref{eq:stabilizerExtentPrimal} as well as the solution $y$ for the restricted dual problem of~\eqref{eq:stabilizerExtentDual}.
In the actual implementation, this function can be realized by just solving the primal problem with SOCP solvers, such as MOSEK~\cite{mosek} or CVXPY~\cite{10.5555/2946645.3007036,agrawal2018rewriting}.

\section{Scaling Up the Exact Stabilizer Extent Calculation}
\label{sec:scalingUp}

Although both the RoM and the stabilizer extent are quantifiable through convex optimization problems, it is impractical to solve them naively for $n>5$ qubit systems due to the super-exponential growth of the number of stabilizer states $\abs{\calS_n}$, as shown in TABLE~\ref{table:sizeOfCalSn}.
Moreover, to formulate the problem in the standard form, we need at least twice the memory size of $A_n$.
For the case of RoM calculation, the authors proposed to employ a classical optimization technique known as the Column Generation (CG) method~\cite{Hamaguchi2024handbookquantifying}.
However, it is nontrivial whether the same approach could be applied to other resource measures since the structure of the matrix $A_n$ used for the calculation heavily depends on the measures.
For the stabilizer extent, in particular, the SOCP is known to be a strict extension of the LP and hence more difficult to solve in general \cite[Section 4.4.2]{boydConvexOptimization2004}.

In the following, we fill in these gaps by utilizing the specific structure of stabilizer states. We first introduce a crucial subroutine for overlap calculation in Sec.~\ref{sec:coreSubroutine}, which reduces the size of $A_n$ via the branch-and-bound method~\cite{Horst1990}, and then show how the computation can be scaled up by utilizing the CG method in Sec.~\ref{subsec:CG}.

\subsection{Core Subroutine: Calculating Stabilizer Fidelity}
\label{sec:coreSubroutine}

To utilize the CG method, we first propose a subroutine to compute the {\it stabilizer fidelity} introduced in Ref.~\cite{Bravyi2019simulationofquantum}, which is the maximal overlap between the target state and all the stabilizer states.
\black{For a pure quantum state $\ket{\psi}$ with its vector $b \in \bbC^{2^n} (b_i = \braket{i}{\psi})$,} the square root of stabilizer fidelity is defined as follows:
\begin{equation*}
    \sqrt{F(\psi)} \defeq \max_{\phi_j \in \calS_n} \abs{\braket{\phi_j}{\psi}}
    = \max_{\phi_j \in \calS_n} \abs{\sum_{i=0}^{2^n-1} \braket{\phi_j}{i}\braket{i}{\psi}}
    = \max_{a_j \in \calA_n} \abs{a_j^\dagger b}.
\end{equation*}

Note that the extension of stabilizer fidelity to mixed states has been pointed out as essential for the \black{faster} computation of RoM~\cite{Hamaguchi2024handbookquantifying}. As expected from the direct relationships between the stabilizer fidelity and the stabilizer extent~\cite{heimendahlStabilizerExtentNot2021}\cite[Proposition~2]{Bravyi2019simulationofquantum},
we find the stabilizer fidelity $F(\psi)$ is also crucial for the computation of the stabilizer extent $\xi(\psi)$.

In the following, we present how to compute the stabilizer fidelity up to 9-qubit systems.
To this end, we first introduce the following theorem useful for enumerating all the stabilizer states.
This theorem is a variant of previous work
\cite[Theorem 2]{struchalinExperimentalEstimationQuantum2021b}\cite[Section~5]{nestClassicalSimulationQuantum2010}\cite[Theorem 5.(ii)]{dehaeneCliffordGroupStabilizer2003}.
The proof is given in Appendix~\ref{app:enumeration}.
\begin{restatable}{theorem}{stabilizerStatesStandardForm}
    \label{thm:stabilizerStatesStandardForm}
    Let $\bbF_2$ be the finite field with two elements.
    For all $k \in \{1,\dots,n\}$, we define
    \begin{align*}
        \calQ_k & \defeq \left\{ Q \relmiddle| Q \in \bbF_2^{k \times k} \textrm{ is an upper triangular matrix} \right\},                                                                                                       \\
        \calR_k & \defeq \left\{ R \relmiddle| R \in \bbF_2^{n \times k} \textrm{ is a reduced column echelon form matrix with $\Rank{R}=k$} \right\},                                                                           \\
        \calT_R & \defeq \left\{ \mathrlap{\:t}\phantom{Q} \relmiddle| \mathrlap{\:t}\phantom{Q} \in \bbF_2^{n \vphantom{\times k}} \textrm{ is a representative of element in the quotient space $\bbF_2^n / \Im(R)$} \right\}.
    \end{align*}
    We also define $\calS_{n,0} \defeq \left\{ \ket{t} \relmiddle| t \in \bbF_2^{n} \right\}$ and the set of states $\calS_{n,k}$ as
    \begin{equation}\label{eq:stabilizerStateStandardForm}
        \calS_{n,k} \defeq
        \left\{ \frac{1}{2^{k/2}} \sum_{x=0}^{2^k-1}(-1)^{x^\top Q x} i^{c^\top x}\ket{R x+t} \relmiddle| Q \in \calQ_k, c \in \bbF_2^k, R \in \calR_k, t\in \calT_R \right\}.
    \end{equation}
    Then, we have $\bigcup_{k=0}^{n} \calS_{n,k} = \calS_n$, where $\calS_n$ is the entire set of $n$-qubit stabilizer states.
\end{restatable}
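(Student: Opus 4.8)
The plan is to prove the two inclusions $\bigcup_{k=0}^n \calS_{n,k} \subseteq \calS_n$ and $\calS_n \subseteq \bigcup_{k=0}^n \calS_{n,k}$ separately, but I would deduce the harder one from the easier by a counting argument rather than a direct structural one. Since $\abs{\calS_n} = 2^n \prod_{k=0}^{n-1}(2^{n-k}+1)$ is already known, it suffices to show that the candidate family consists of stabilizer states, that distinct parameters $(Q,c,R,t)$ yield distinct states (so that the union is counted without overlap), and that the resulting total equals $\abs{\calS_n}$.

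For the forward inclusion I would argue constructively. The reference state $\ket{+}^{\otimes k}\otimes\ket{0}^{\otimes(n-k)} = 2^{-k/2}\sum_{x\in\bbF_2^k}\ket{(x,0)}$ is manifestly a stabilizer state, matching \eqref{eq:stabilizerStateStandardForm} with $Q=0$, $c=0$, $R=\big(\begin{smallmatrix}I_k\\0\end{smallmatrix}\big)$, $t=0$. I would then produce the general member of $\calS_{n,k}$ by applying Clifford gates: controlled-$Z$ gates between qubits $i<j$ and phase gates $S,S^2$ install the phase $(-1)^{x^\top Q x} i^{c^\top x}$ (off-diagonal entries of $Q$ from $CZ$, the per-qubit signs and $i$-phases from $S$ powers), while an invertible affine relabeling $\ket{(x,0)}\mapsto\ket{Rx+t}$ — realized by $CNOT$ and $X$ gates after extending the injective $R$ to an element of $GL_n(\bbF_2)$ — produces an arbitrary reduced column echelon $R$ and coset representative $t$. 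As Clifford operations preserve $\calS_n$, every element of $\calS_{n,k}$ is a stabilizer state.

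To conclude equality I would count. Since a state in $\calS_{n,k}$ has exactly $2^k$ nonzero amplitudes, the index $k$ is recoverable from the state, so the union is disjoint over $k$; within a fixed $k$, the affine support $\Im(R)+t$ determines $R\in\calR_k$ uniquely (reduced column echelon form is a canonical basis) and $t$ uniquely (it is the distinguished coset representative). For fixed $(R,t)$ I would check that the phase map $(Q,c)\mapsto\big(x\mapsto(-1)^{x^\top Qx}i^{c^\top x}\big)$ is injective on $\calQ_k\times\bbF_2^k$: evaluating at $x=e_i$ recovers $(Q_{ii},c_i)$ bijectively from the fourth root of unity $(-1)^{Q_{ii}}i^{c_i}$, and evaluating at $x=e_i+e_j$ recovers each off-diagonal $Q_{ij}$, giving $2^{k(k+1)/2}\cdot 2^k$ distinct phase functions. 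With $\abs{\calR_k}=\qBinom{n}{k}$ (the Gaussian binomial) and $\abs{\calT_R}=2^{n-k}$, this yields
\begin{equation*}
  \sum_{k=0}^n \abs{\calS_{n,k}} = \sum_{k=0}^n 2^{k(k+1)/2}\,2^{k}\,\qBinom{n}{k}\,2^{n-k} = 2^n\sum_{k=0}^n 2^{\binom{k+1}{2}}\qBinom{n}{k}.
\end{equation*}
Applying the finite $q$-binomial theorem $\sum_{k=0}^n q^{\binom{k}{2}}\qBinom{n}{k}z^k = \prod_{i=0}^{n-1}(1+q^i z)$ specialized to $q=z=2$ turns the sum into $2^n\prod_{j=1}^{n}(2^j+1) = \abs{\calS_n}$. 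Together with the forward inclusion and the disjointness just established, this forces $\bigcup_{k=0}^n\calS_{n,k}=\calS_n$.

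The step I expect to be most delicate is the phase-map injectivity. Because $x_i^2=x_i$ over $\bbF_2$, the diagonal of $Q$ contributes a \emph{linear} sign that could a priori be reabsorbed into $c$, so one must verify carefully that $(Q_{ii},c_i)$ together range over all four phases at $x=e_i$ without collision — this is exactly what makes the count $2^{k(k+1)/2}\cdot 2^k$ (rather than something smaller) correct, and hence what makes the cardinalities match. The remaining bookkeeping, namely the identification $\abs{\calR_k}=\qBinom{n}{k}$ and the $q$-binomial evaluation, is routine.
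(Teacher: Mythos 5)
Your proposal is correct and follows essentially the same route as the paper's proof: establish the forward inclusion, show the parametrization $(Q,c,R,t)\mapsto\ket{\phi}$ is injective (support determines $k,R,t$; phases evaluated at $e_i$ and $e_i+e_j$ determine $c$ and $Q$), and then match cardinalities via the $q$-binomial theorem to get surjectivity. The only difference is cosmetic — you prove the forward inclusion constructively with Clifford gates where the paper cites prior work — and your extra care about the diagonal of $Q$ not being absorbable into $c$ is exactly the right point to check.
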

Here, we identify a integer $\sum_{i=0}^{n-1} x_i 2^i$ with a vector $\mqty[ x_0 \; x_1 \; \cdots \; x_{n-1} ]^\top$.
Let $\ket{\phi_j}$ be one of the stabilizer states of $k>0$ in Theorem~\ref{thm:stabilizerStatesStandardForm}, $\ket{\phi_j} = \frac{1}{2^{k/2}}\sum_{x=0}^{2^k-1} (-1)^{x^\top Q x} i^{c^\top x} \ket{Rx+t}$.
The overlap between $\ket{\psi}$ and $\ket{\phi_j}$ is
\begin{align*}
    \abs{\braket{\phi_j}{\psi}}
     & = \abs{\frac{1}{2^{k/2}} \sum_{x=0}^{2^k-1} \qty((-1)^{x^\top Q x} i^{c^\top x})^\dagger \braket{Rx+t}{\psi}}   \\
     & = \black{\abs{\sum_{x=0}^{2^k-1} \qty((-1)^{x^\top Q x} i^{c^\top x})^\dagger \qty(\frac{1}{2^{k/2}}b_{Rx+t})}} \\
     & = \black{\abs{\sum_{x=0}^{2^k-1} (-1)^{x^\top Q x} i^{c^\top x} \qty(\frac{1}{2^{k/2}}b_{Rx+t}^\dagger)}}.
\end{align*}
\black{We define $P_x \defeq (1/2^{k/2}) b_x^\dagger$.
    Recall that our goal is to compute $\max_{\phi_j \in \calS_n} \abs{\braket{\phi_j}{\psi}}$.
    Since $\max_{\phi_j \in \calS_{n,0}} \abs{\braket{\phi_j}{\psi}} = \max_{0 \leq x < 2^n} \abs{b_x}$, we can obtain
    \begin{equation}\label{eq:overlapProblemAll}
        \black{
            \max_{\phi_j \in \calS_n} \abs{\braket{\phi_j}{\psi}} =
            \max\qty(  \max_{0 \leq x < 2^n} \abs{b_x}, \quad
            \max_{1 \leq k < n}\max_{\substack{R \in \calR_k\\t \in \calT_R}}\max_{\substack{Q\in \calQ_k\\ c\in\bbF_2^k}} \qty{ \abs{\sum_{x=0}^{2^k-1} (-1)^{x^\top Q x} i^{c^\top x} P_{Rx+t}}}).
        }
    \end{equation}}

\black{Now, let us consider the following problem:
    \begin{equation}\label{eq:overlapProblem}
        \max_{Q,c} \qty{ \abs{\sum_{x=0}^{2^n-1} (-1)^{x^\top Q x} i^{c^\top x} P_{x}} }.
    \end{equation}
    For any $k,R,t$ in problem~\eqref{eq:overlapProblemAll}, by using $P'_x = P_{Rx+t} (0 \leq x \leq 2^k)$ as $P_x$ in problem~\eqref{eq:overlapProblem}, we can reduce the problem~\eqref{eq:overlapProblemAll} to~\eqref{eq:overlapProblem}.
    Thus, in the following, we only consider solving~\eqref{eq:overlapProblem}.}
If we solve~\eqref{eq:overlapProblem} naively, the time complexity is $\order{2^{n+n(n+1)/2} 2^n n^2}$, where $2^{n+n(n+1)/2}$ is the number of combinations for $(Q,c)$, $2^n$ is the number of terms in the summation, and $n^2$ is the computational cost per term.
However, we find that we can solve the problem much faster.
\begin{theorem}\label{thm:overlapProblem}
    Problem~\eqref{eq:overlapProblem} can be solved with the time complexity of $\order{2^{n+n(n+1)/2}}$ and the space complexity of $\order{2^n}$.
\end{theorem}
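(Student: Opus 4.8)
The plan is to solve~\eqref{eq:overlapProblem} by a divide-and-conquer recursion that removes one coordinate of $x$ at a time, exploiting the fact that the phase $(-1)^{x^\top Q x} i^{c^\top x}$ factorizes across the coordinate being peeled off. For each $k \in \{0,1,\dots,n\}$ and any complex array $P = (P_x)_{x \in \bbF_2^k}$, define the subproblem value
\[
  M_k(P) \defeq \max_{Q, c} \abs{\sum_{x \in \bbF_2^k} (-1)^{x^\top Q x} i^{c^\top x} P_x},
\]
where $Q$ ranges over upper-triangular matrices in $\bbF_2^{k \times k}$ and $c$ over $\bbF_2^k$; the target is $M_n$ of the array appearing in~\eqref{eq:overlapProblem}. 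Writing $x = (x', x_k)$ with $x' \in \bbF_2^{k-1}$, the last column of $Q$ splits into its off-diagonal part $q \in \bbF_2^{k-1}$ and the diagonal entry $Q_{kk}$, while $c$ splits into $c'$ and $c_k$. Performing the sum over $x_k \in \{0,1\}$ first collapses the contribution of the last coordinate into the single scalar $\epsilon \defeq (-1)^{Q_{kk}} i^{c_k} \in \{1,-1,i,-i\}$, reducing the remaining sum to a $(k-1)$-variable instance on the array $\tilde P_{q,\epsilon}(x') \defeq P_{(x',0)} + \epsilon\,(-1)^{q^\top x'} P_{(x',1)}$. This yields the recursion
\[
  M_k(P) = \max_{q \in \bbF_2^{k-1},\ \epsilon \in \{1,-1,i,-i\}} M_{k-1}\qty(\tilde P_{q,\epsilon}),
\]
with base case $M_0(P) = \abs{P_0}$ at a leaf. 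Traversing this tree depth-first and tracking the running maximum over all leaves solves the problem.

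The crux of the complexity analysis is that the recursion \emph{shares} the reduced arrays across different $Q$, avoiding the extra factor of $n$ that a separate Fast Walsh--Hadamard Transform per $Q$ would incur. At a level-$k$ node we branch into $2^{k-1}\cdot 2 \cdot 2 = 2^{k+1}$ children (the choices of $q$, $Q_{kk}$, $c_k$), and forming all of them from the parent array costs $\order{2^{2k}}$ time, since each reduced array has $2^{k-1}$ entries and the sign patterns $(-1)^{q^\top x'}$ can be generated incrementally. Hence the total work $T(k)$ of a level-$k$ subtree satisfies $T(k) = \order{2^{2k}} + 2^{k+1} T(k-1)$ with $T(0) = \order{1}$. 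The number of leaves equals the number of $(Q,c)$ pairs, $\prod_{k=1}^{n} 2^{k+1} = 2^{n + n(n+1)/2}$, and the work contributed at level $k$ scales as $2^{\,n+n(n+1)/2 - k(k-1)/2}$; the geometric decay in $k$ makes the leaf level dominate, so summing gives $T(n) = \order{2^{n+n(n+1)/2}}$, as claimed.

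For the space bound I would run the recursion strictly depth-first, so that only the arrays along a single root-to-leaf path are resident. Their sizes are $2^n, 2^{n-1}, \dots, 1$, summing to $\order{2^n}$, and reusing the parent's buffer to build each child in place (the reduced array overwrites the parent in a single butterfly-like pass) keeps the constant small and yields the stated $\order{2^n}$ space. The case $k=0$ excluded from~\eqref{eq:overlapProblem}, corresponding to the basis states $\ket{t}$, contributes $\max_t \abs{b_t}$ and is handled separately in $\order{2^n}$ time.

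The main obstacle is the complexity bookkeeping rather than the algebra: the naive route of computing, for each fixed $Q$, all $2^n$ values of $c$ by a transform costs $\order{2^n n}$ per $Q$ and hence $\order{2^{n+n(n+1)/2}\, n}$ overall, so the real work is to argue that the recursive sharing of the reduced arrays across the $Q$-branches removes this factor of $n$, i.e.\ that the recursion spends amortized $\order{1}$ per $(Q,c)$ pair. Verifying the exponent arithmetic in the recurrence and confirming that the per-level overhead $\order{2^{2k}}$ is geometrically dominated by the leaf contribution is the step that needs the most care.
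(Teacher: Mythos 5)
Your recursion, complexity recurrence, and depth-first in-place evaluation are essentially identical to the paper's proof (the paper peels off the first coordinate $x_0$ rather than the last, which is immaterial, and your observation that the sign patterns $(-1)^{q^\top x'}$ can be generated incrementally only sharpens the per-node cost from $\order{n2^{2k}}$ to $\order{2^{2k}}$; either way the leaf level dominates and the exponent arithmetic checks out). The space argument also matches.

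However, there is one genuine gap, and it is in precisely the step you dismiss as unproblematic (``the main obstacle is the complexity bookkeeping rather than the algebra''). Your reduction rests on the factorization $i^{c^\top x} = i^{c'^\top x'}\, i^{c_k x_k}$, which is \emph{false} if the exponent $c^\top x$ is computed in $\bbF_2$, as it is in the parametrization of Theorem~\ref{thm:stabilizerStatesStandardForm} from which problem~\eqref{eq:overlapProblem} is derived: for $c=(1,1)$ and $x=(1,1)$ one has $i^{1+1}=i^{0}=1$ in $\bbF_2$, whereas $i^{1}\cdot i^{1}=-1$. If you instead interpret $c^\top x$ over $\bbZ$ so that the factorization holds, you are maximizing over a different family of phase functions than the one defining~\eqref{eq:overlapProblem}, and the equality of the two maxima is exactly what must be proved. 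The paper devotes the first half of its proof to this: reinterpreting the exponent over $\{0,1\}\subset\bbZ$ multiplies the term by $(-1)^{\binom{p}{2}}$, where $p$ is the number of indices with $c_i=x_i=1$, and this discrepancy equals $(-1)^{x^\top Q' x}$ for the upper-triangular matrix $Q'$ with $Q'_{i,j}=1$ iff $i<j$ and $c_i=c_j=1$; since $Q\mapsto Q+Q'$ is a bijection on upper-triangular matrices for each fixed $c$, the set of summands --- and hence the maximum --- is unchanged. Your proof needs this lemma (or an equivalent statement) before the recursion is valid; with it inserted, the rest of your argument goes through and coincides with the paper's.
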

\begin{proof}
    First, we show that we can regard $\bbF_2$ as $\{0,1\} \subset \bbZ$ in the calculation of $Q,c$ and $x$.
    The reason why we consider this substitution is that then $i^{\alpha+\beta}=i^\alpha i^\beta$ holds true, whereas in $\bbF_2$ it does not hold since $-1=i^{1+1} \neq i^{0}=1$ with $1+1 \equiv 0 \pmod 2$.
    \black{Let $\mathbbm{1}[\cdot]$ be the indicator function, which returns 1 if the given condition is true and 0 otherwise.}
    By this substitution, the term $(-1)^{x^\top Q x}$ is invariant, while the term $i^{c^\top x}$ is multiplied by $-1$ if and only if
    \begin{equation*}
        p \equiv 2 \text{ or } 3 \pmod 4 \quad \text{where} \quad  p \defeq \sum_{i=0}^{n-1} \mathbbm{1}[c_i=1 \text{ and } x_i=1]
    \end{equation*}
    \black{Now, for $k>0$ we consider $Q' \in \{0,1\}^{k \times k}$ where $Q'_{{\lambda_1},{\lambda_2}} = \mathbbm{1}[\lambda_1< \lambda_2 \text{ and } c_{\lambda_1}=c_{\lambda_2}=1]$. Then,
    \begin{align*}
        (-1)^{x^\top Q' x}=(-1)^{\binom{p}{2}}=
        \begin{cases}
            1  & \text{if $p \equiv 0 \text{ or } 1 \pmod 4$}, \\
            -1 & \text{if $p \equiv 2 \text{ or } 3 \pmod 4$}.
        \end{cases}
    \end{align*}
    Thus, by multiplying $(-1)^{x^\top Q' x}$ to the original form~\eqref{eq:stabilizerStateStandardForm}, in other words, by identifying $Q+Q' \in \bbZ^{k \times k}$ with $Q \in \bbF_2^{k\times k}$, we indeed find that the substitution is valid.}

    Let us solve the problem~\eqref{eq:overlapProblem}. The case $n=1$ is obvious.
    For $n>1$, we define
    \begin{gather*}
        Q = \begin{bmatrix}
            Q_{00} & Q_{0}^\top   \\
            0      & \overline{Q}
        \end{bmatrix} \qty(Q_{00} \in \{0,1\}, Q_0 \in \{0,1\}^{n-1}, \overline{Q} \in \{0,1\}^{(n-1) \times (n-1)}),\\
        c = \begin{bmatrix}
            c_0 \\
            \overline{c}
        \end{bmatrix} \qty(c_0 \in \{0,1\}, \overline{c} \in \{0,1\}^{n-1}),\quad
        x = \begin{bmatrix}
            x_0 \\
            \overline{x}
        \end{bmatrix} \qty(x_0 \in \{0,1\}, \overline{x} \in \{0,1\}^{n-1}).
    \end{gather*}
    Since
    $x^\top Q x = x_0 (Q_{00}+Q_0^\top \overline{x}) + \overline{x}^\top \overline{Q} \overline{x}$
    and
    $c^\top x = c_0 x_0 + \overline{c}^\top \overline{x}$,
    we can derive that
    \begin{align}
        \sum_{x=0}^{2^n-1} (-1)^{x^\top Q x} i^{c^\top x} P_x
         & = \black{\sum_{\overline{x}=0}^{2^{n-1}-1} (-1)^{\overline{x}^\top \overline{Q} \overline{x}} i^{\overline{c}^\top \overline{x}}
        P_{2\overline{x}}+\sum_{\overline{x}=0}^{2^{n-1}-1} (-1)^{ (Q_{00}+Q_0^\top \overline{x}) + \overline{x}^\top \overline{Q} \overline{x}} i^{c_0+\overline{c}^\top \overline{x}}P_{2\overline{x}+1}}                                   \notag \\
         & = \sum_{\overline{x}=0}^{2^{n-1}-1} (-1)^{\overline{x}^\top \overline{Q} \overline{x}} i^{\overline{c}^\top \overline{x}}
        \qty(P_{2\overline{x}} + (-1)^{Q_{00}+Q_0^\top \overline{x}} i^{c_0} P_{2\overline{x}+1})                                   \notag                                                                                                           \\
         & = \sum_{\overline{x}=0}^{2^{n-1}-1} (-1)^{\overline{x}^\top \overline{Q} \overline{x}} i^{\overline{c}^\top \overline{x}} \overline{P}_{\overline{x}} \label{eq:dfsRecursion}
    \end{align}
    where $\overline{P}_{\overline{x}} \defeq P_{2\overline{x}} + (-1)^{Q_{00}+Q_0^\top \overline{x}} i^{c_0} P_{2\overline{x}+1}$.
    Since equation~\eqref{eq:dfsRecursion} has the same form as the original one, the problem~\eqref{eq:overlapProblem} can be solved by recursively computing the maximum values for all candidates of $(Q_{00}, Q_0, c_0)$, and then returning the overall maximum among them.

    We now analyze the time complexity of this recursive algorithm. There are $2^{n+1}$ possible combinations of $Q_{00}, Q_0$, and $c_0$.
    For each combination, $\overline{P}_{\overline{x}}$ for all $\overline{x}$ can be computed in $\order{n2^{n-1}}$ time.
    Hence, we can establish the recurrence relation for the time complexity:
    \begin{equation*}
        T(n) = 2^{n+1} (n2^{n-1}+T(n-1)), \quad T(1) = 4.
    \end{equation*}
    Solving this recurrence relation yields
    \begin{gather*}
        T(n) = 2^{n+\frac{n(n+1)}{2}}+ \sum_{d=2}^{n} 2^{n+\frac{n(n+1)}{2}-\frac{d(d-1)}{2}}d,\\
        \frac{T(n)}{2^{n+\frac{n(n+1)}{2}}} = 1 + \sum_{d=2}^{n} 2^{-\frac{d(d-1)}{2}} d
        \leq 1 + \sum_{d=2}^{n} 2^{-d+1} d
        = 4-(n+2)2^{-n+1} \to 4 \quad (n \to \infty).
    \end{gather*}
    Therefore, the time complexity is $\order{2^{n+n(n+1)/2}}$.
    This recursive algorithm establishes the space complexity of $\order{2^n}$ through in-place computation.
\end{proof}
\newpage

\begin{figure*}[t]
    \centering
    \includegraphics[width=\columnwidth]{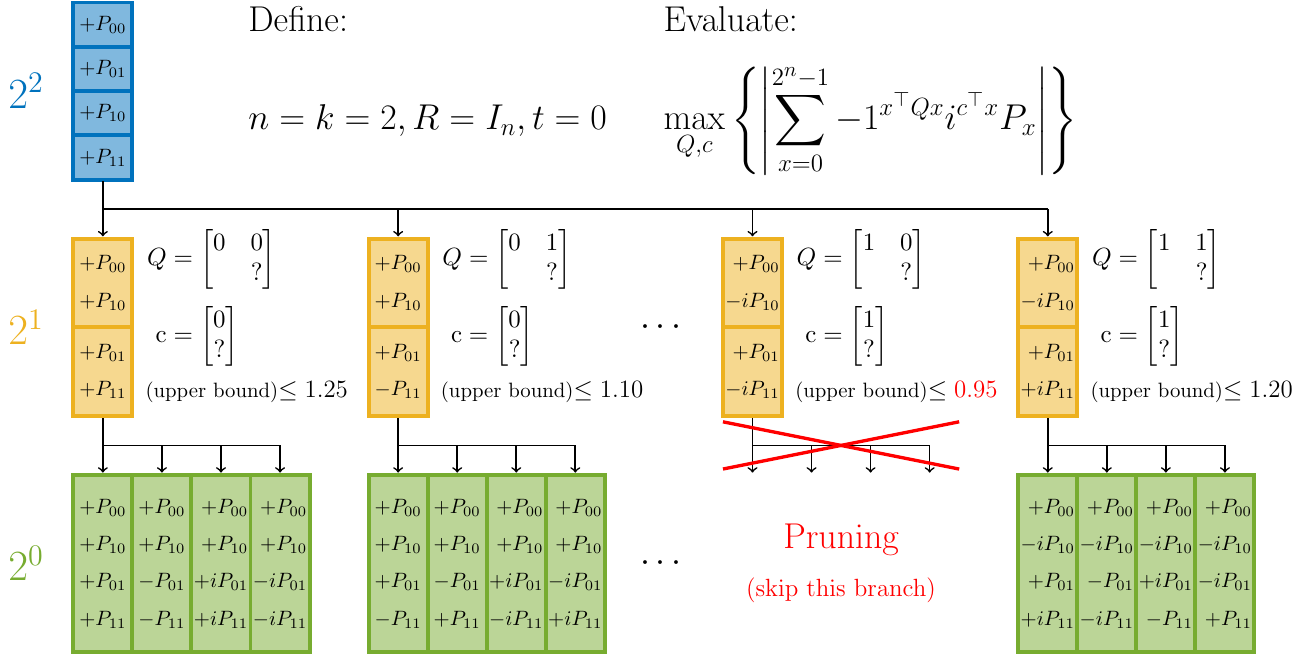}
    \caption{
        Visual explanation of stabilizer pruning.
        Each cell stores the evaluated value of the expression,
        and the $2^{n+n(n+1)/2}$ leaf nodes (green) correspond to
        the values $\sum_{x=0}^{2^n-1} (-1)^{x^\top Q x} i^{c^\top x} P_x$.
        Since we only need one set of cells per color during the procedure, we can do it in-place, and the space complexity is $\order{\sum_{i=0}^{n} 2^i}$, i.e., $\order{2^n}$.
        If the lower bound of the solution exists, for example, 1.0, we can prune a branch whose upper bound is less than 1.0.
    }
    \label{fig:BB}
\end{figure*}

Note that the actual implementation of the recursive computation in the proof of Theorem~\ref{thm:overlapProblem} is done in a slightly modified way to improve efficiency.
See our GitHub repository~\cite{Hamaguchi_stabilizer_extent_2024} for the codes, including the enumeration of the stabilizer states based on Theorem~\ref{thm:stabilizerStatesStandardForm}.

By applying the same argument for every $k,R,t$ in problem~\eqref{eq:overlapProblemAll},
we can derive the next theorem.
\begin{restatable}{theorem}{complexityStabilizerOverlap}
    \label{thm:complexityStabilizerOverlap}
    The stabilizer fidelity of a $n$-qubit pure state $\ket{\psi}$
    can be computed
    in the time complexity of
    $\order{\abs{S_n}}$ and
    the space complexity of $\order{2^n}$.
\end{restatable}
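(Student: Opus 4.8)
The plan is to bootstrap Theorem~\ref{thm:overlapProblem}, which already handles the single slice $k=n$, $R=I_n$, $t=0$, into a statement over all of $\calS_n$ by running the same recursion once for every triple $(k,R,t)$ appearing in the enumeration of Theorem~\ref{thm:stabilizerStatesStandardForm}. Using that enumeration I would write the fidelity as
\begin{equation*}
  \sqrt{F(\psi)} = \max_{0 \le k \le n}\ \max_{R \in \calR_k,\, t \in \calT_R}\ \max_{Q,c}\ \abs{\sum_{x=0}^{2^k-1} (-1)^{x^\top Q x} i^{c^\top x}\, \tfrac{1}{2^{k/2}} b_{Rx+t}^\dagger},
\end{equation*}
first disposing of the $k=0$ layer, where the overlap is simply $\max_t \abs{b_t}$ and costs $\order{2^n}$. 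For each fixed $k\ge 1$ and each fixed $(R,t)$, the innermost maximization over $(Q,c)$ is exactly Problem~\eqref{eq:overlapProblem} with $n$ replaced by $k$ and the amplitudes $P_x$ replaced by $\tfrac{1}{2^{k/2}} b_{Rx+t}^\dagger$; hence Theorem~\ref{thm:overlapProblem} applies verbatim, solving it in time $\order{2^{k+k(k+1)/2}}$ and space $\order{2^k}$.

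The complexity bookkeeping is the crux. The exponent $2^{k+k(k+1)/2}$ is precisely $\abs{\calQ_k}\cdot\abs{\bbF_2^k}$, the number of $(Q,c)$ pairs, which in the canonical form equals the number of stabilizer states in $\calS_{n,k}$ sharing the fixed $(R,t)$. The recursion therefore spends $\order{1}$ amortized time per stabilizer state, and summing over the $\abs{\calR_k}\cdot\abs{\calT_R}$ choices of $(R,t)$ and over $k$ makes the total telescope: since $\abs{\calS_{n,k}} = \abs{\calR_k}\cdot 2^{n-k}\cdot 2^{k+k(k+1)/2}$, the total time is $\order{\sum_{k=0}^n \abs{\calS_{n,k}}} = \order{\abs{\calS_n}}$ by $\bigcup_k \calS_{n,k} = \calS_n$. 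For the space bound I would process the triples $(k,R,t)$ one at a time, reusing a single scratch buffer of size $\order{2^k}\le\order{2^n}$ for the in-place recursion of Fig.~\ref{fig:BB} while retaining only the running maximum; together with the $\order{2^n}$ storage for $b$ this gives $\order{2^n}$ overall.

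The main obstacle I anticipate is not the recursion but the reduction of a general $(R,t)$ slice to the canonical form of Theorem~\ref{thm:overlapProblem}, and in particular showing that gathering the subvector $(b_{Rx+t})_{x=0}^{2^k-1}$ never dominates the budget. I would enumerate $x$ in Gray-code order so the index $Rx+t$ updates by a single XOR with one column of $R$ at each step, giving $\order{2^k}$ per slice in the word-RAM model; even charging $\order{n}$ bit operations per update, the aggregate setup cost $\order{n\,2^n\sum_k \abs{\calR_k}}$ is controlled by the Galois number $\sum_k\abs{\calR_k}=2^{\order{n^2/4}}$ and is negligible against $\abs{\calS_n}=2^{\order{n^2}}$. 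The branch-and-bound pruning only removes work, so the worst-case bound is untouched. Finally I would verify the edge identifications that keep the amortized count exact rather than an overcount: the $\bbF_2$-versus-$\bbZ$ convention for $Q$ already settled in the proof of Theorem~\ref{thm:overlapProblem}, and the fact that distinct $(R,t)$ produce genuinely distinct states.
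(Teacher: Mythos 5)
Your proposal is correct and follows essentially the same route as the paper, which derives Theorem~\ref{thm:complexityStabilizerOverlap} by applying the recursion of Theorem~\ref{thm:overlapProblem} to every $(k,R,t)$ slice of the enumeration in Theorem~\ref{thm:stabilizerStatesStandardForm} and observing that the per-slice cost $\order{2^{k+k(k+1)/2}}$ summed over all slices telescopes to $\order{\abs{\calS_n}}$. Your additional bookkeeping (the Gray-code gathering of $(b_{Rx+t})_x$ and the comparison against the Galois number) only makes explicit details the paper leaves implicit in its one-line derivation.
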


Moreover, by applying heuristic techniques to these results, we can obtain an algorithm that significantly accelerates the actual computation speed while we cannot reduce the time complexity itself.
We call this recursive algorithm to compute the stabilizer fidelity as \textit{stabilizer pruning}, which is based on Theorem~\ref{thm:complexityStabilizerOverlap} and the branch-and-bound method~\cite{Horst1990} (see FIG.~\ref{fig:BB} for its visual representation).
The main idea is that, since we are solving a maximization problem, solutions inferior to the current best solution (or to 1 for the case in Section~\ref{subsec:optimality}) are unnecessary. We can thus terminate branching if the upper bound of the current state is less than these values. For more details on the pruning strategy, see Appendix~\ref{sec:Pruning}.

Let us briefly discuss the numerical results of the stabilizer pruning.
We find that the run time to compute the stabilizer fidelity of a Haar random 8-qubit state is 5 seconds, and that of a 9-qubit state is 26 minutes.
We will use a slightly modified version of this algorithm as a subroutine to compute the stabilizer extent; namely, besides finding the maximum, we also identify other large overlaps.
All numerical experiments in this paper were conducted using \Cpp17
compiled by GCC 9.4.0 and a cluster computer powered by
Intel(R) Xeon(R) CPU E52640 v4 with \SI{270}{\giga\byte} of RAM using 40 threads.
All the codes are available at GitHub~\cite{Hamaguchi_stabilizer_extent_2024}.

\black{Comparing these results with our previous work~\cite[Corollary 1]{Hamaguchi2024handbookquantifying}, the computational complexity has been improved by a factor of $n$, and the computation time has been significantly reduced. This difference can be attributed to the fact that, while our previous work could accept general mixed state $\rho$ as input to compute the maximum fidelity, the current results only consider a pure state $\ket{\psi}$ as input to compute the stabilizer fidelity. As previously mentioned, the stabilizer fidelity for pure states can utilize the mathematically simpler representation shown in Theorem~\ref{thm:stabilizerStatesStandardForm} and stabilizer pruning, indicating a fundamental gap between the two approaches.}

\subsection{CG Method for Stabilizer Extent Calculation}
\label{subsec:CG}

Next, we introduce the Column Generation (CG) method \cite{desaulniersColumnGeneration2005}, the primary method to compute the stabilizer extent $\xi(\psi)$ up to 9-qubit systems.
Our proposed algorithm with the CG method, as outlined in Algorithm~\ref{alg:CG}, is an iterative algorithm that solves a sub-problem restricted to $\calC \subseteq \calA_n$ per iteration.
It initializes a small subset of columns $\calC_0$ and updates it so that the number of violated constraints reduces until there are no more violated columns.
The execution time of this algorithm for a Haar random state is described in TABLE~\ref{table:sizeOfCalSn}.
While we direct readers to Ref.~\cite{Hamaguchi2024handbookquantifying} for further implementation techniques, we describe two critical aspects of this algorithm: the initialization process and the solution's optimality.

\begin{algorithm}[htbp]
    \KwIn{Vector $b \in \bbC^{2^n}$ corresponding to the state $\ket{\psi}$}
    \KwOut{Exact stabilizer extent $\xi(\psi)$}
    \SetKwFunction{SolveSOCP}{SolveSOCP}
    $\calC_0 \gets \text{Partial set of $\calA_n$}$
    \Comment*[r]{Initialize using top overlap $\abs{a_j^\dagger b}$}
    \For{$k = 0, 1, 2, \ldots$} {
        $x_k, y_k \gets \SolveSOCP(\calC_k, \bm{b})$\\
        $\hat{\xi}_k(\psi) \gets \norm{x_k}_1^2$\\
        $\calC' \gets \qty{a_j \in \calA_n \relmiddle| \abs{a_j^\dagger y_k} > 1}$
        \Comment*[r]{Use of subroutine in Section~\ref{sec:coreSubroutine}}
        \If {$\calC' = \emptyset$} {
            \Return $\xi(\psi) = \hat{\xi}_k(\psi)$
        }
        $\calC_{k+1} \gets \calC_{k} \cup \calC'$
    }
    \caption{Exact Stabilizer Extent Calculation by Column Generation}
    \label{alg:CG}
\end{algorithm}

\subsubsection{Initialization}
\label{subsec:initialization}

We find that the quality of the initial guess and the number of optimization steps of Algorithm~\ref{alg:CG} can be significantly improved by choosing a subset $\calC_0 \subseteq \calA_n$ in descending order of $\abs{a_j^\dagger b}$, which can be computed with the complexity as stated in Theorem~\ref{thm:complexityStabilizerOverlap}.
While the size of $\calC_0$ is arbitrary, in computations in TABLE~\ref{table:sizeOfCalSn}, we have chosen the size to be 10,000 for $n \leq 8$ and 100,000 for $n=9$. \black{We have included all the computational bases $\calS_{n,0} = \left\{ \ket{t} \relmiddle| t \in \bbF_2^{n} \right\}$ to assure the feasibility of the problem.}

The use of $\abs{a_j^\dagger b} = \abs{\braket{\phi_j}{\psi}}$ as the indicator can be justified with various interpretations, including the direct relationships between the stabilizer fidelity and the stabilizer extent~\cite{heimendahlStabilizerExtentNot2021}\cite[Proposition~2]{Bravyi2019simulationofquantum}.
Another one is to consider it as the ``closeness'' between the states
$\ket{\phi_j}$ and $\ket{\psi}$,
which means choosing states based on their overlaps is reasonable.
The numerical experiment result in FIG.~\ref{fig:CG} also supports the effectiveness of this indicator.
For a Haar random pure 8-qubit state, even if we use as small a subset as $\abs{\calC_0} = 10^{-10} \abs{\calA_n}$, the obtained value $\hat{\xi}_0(\psi)$ closely approximated the exact value $\xi(\psi)$ and outperformed randomly selected $\calC_0$.
Such behavior was typical in other test cases, including $9$-qubit states.

\subsubsection{Solution's Optimality}
\label{subsec:optimality}

The terminate criterion for Algorithm~\ref{alg:CG} is the absence of columns that violate the dual constraints $\abs{a_j^\dagger y_k} \leq 1$, which can be checked by Theorem~\ref{thm:complexityStabilizerOverlap} as well.
The termination of the CG method indicates that the optimal dual solution $y_k$ for problem~\eqref{eq:stabilizerExtentDual}
has been found, which then implies that the primal solution $x_k$ is also optimal for problem~\eqref{eq:stabilizerExtentPrimal},
due to the strong duality of the SOCP.
Consequently, we can affirm that Algorithm~\ref{alg:CG} will compute the stabilizer extent for any state $\ket{\psi}$ once it terminates.
The convergence of the CG method was further validated through numerical experiments.
For the same 8-qubit state in Sec.~\ref{subsec:initialization}, $\max_{a_j \in \calA_n} \abs{a_j^\dagger y_k}$ reached unity after 4 iterations, indicating the discovery of the optimal solution. The method converged within just 1 or 2 iterations for different test cases.

\begin{figure}[t]
    \centering
    \includegraphics[width=\columnwidth]{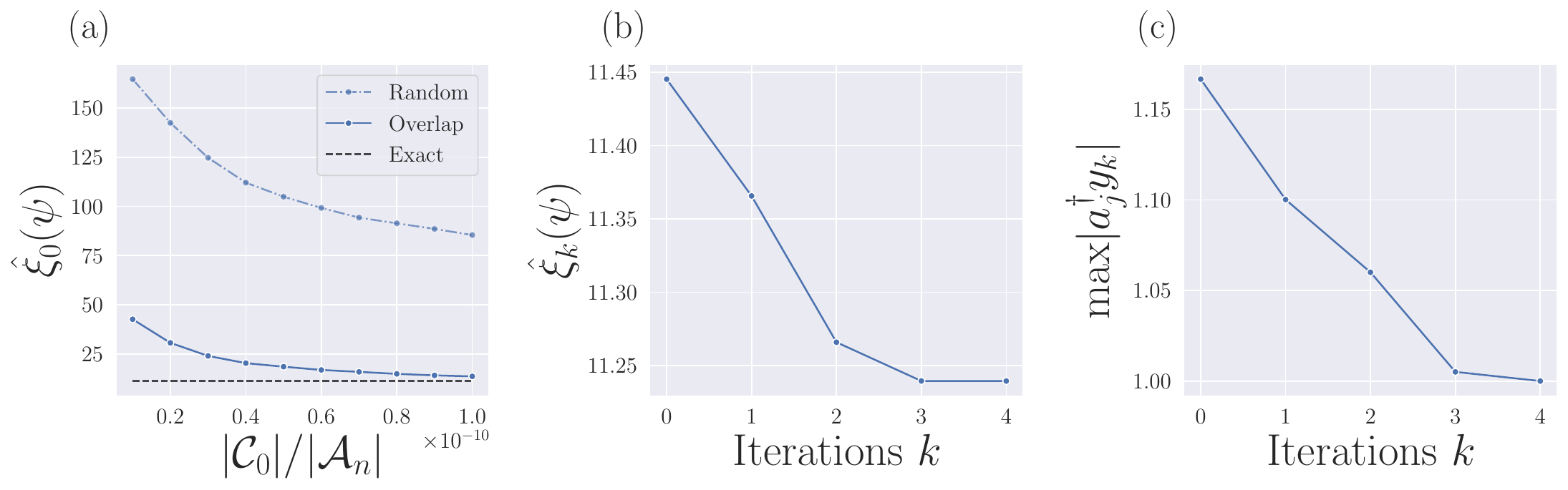}
    \caption{
        (a) Provisional value $\hat{\xi}_0(\psi)$ in Algorithm~\ref{alg:CG}
        for a Haar random 8-qubit state.
        The ratio $\abs{\calC_0}/\abs{\calA_n}$ varies from $10^{-11}$ to $10^{-10}$.
        We got much better results with the top overlap method compared to the randomly selected $\calC_0$.
        The black dashed line labeled ``Exact'' represents $\xi(\psi)$.
        (b) The convergence of the CG method for the same state.
        (c)~$\max_{a_j \in \calA_n} \abs{a_j^\dagger y_k}$ reached 1.00 after 4 iterations, indicating the optimal solution.
        The time to compute $\xi(\psi)$ was 4.3 minutes.
    }
    \label{fig:CG}
\end{figure}

\section{Calculation for States With Special Properties}
\label{sec:restrictedRealProblem}

So far, we have explored the method for calculating
the stabilizer extent applicable to the general case up to $n\leq 9$.
While the super-exponential growth of $\abs{\calS_n}$ is prohibitive, the computation can be further extended to a larger size when the target state is classified into specific classes.

One such example is a product state, $\ket{\psi} = \bigotimes_j \ket{\psi_j}$.
It is known that if all the components $\ket{\psi_j}$ are at most 3-qubit state, then the multiplicativity holds as $\xi(\psi) = \prod_j \xi(\psi_j)$~\cite{Bravyi2019simulationofquantum}.
While we cannot guarantee the multiplicativity if the factors contain a 4 or more qubits state~\cite {heimendahlStabilizerExtentNot2021}, the tensor product of each solution $x_j$ of $\xi(\psi_j)$ is still practically useful since it can be used as the initial guess for the solution of $\xi(\psi)$.

Another remarkable class of states is those with real amplitudes.
To the best of our knowledge, this property has not been investigated in previous works and offers significant calculation advantages.
One of the well-known examples are the W-state and the GHZ-state, defined as follows:
\begin{equation*}
    \ket{W} \defeq \frac{1}{\sqrt{n}} \qty(\ket{100\dots0}+\ket{010\dots0}+\dots+\ket{000\dots1}), \quad \ket{GHZ} \defeq \frac{\ket{0}^{\otimes n} + \ket{1}^{\otimes n}}{\sqrt{2}}.
\end{equation*}
Other critical applications include eigenstates of quantum many-body Hamiltonians with time-reversal symmetry, whose matrix elements are given by real components.
For instance, physically meaningful Hamiltonians such as in the XXZ and Heisenberg spin models, the transverse-field Ising model, the Fermi--Hubbard model, and the $t$-$J$ model, all preserve the time-reversal symmetry regardless of the underlying lattice. Beyond condensed matter systems, we may also consider first-principle quantum chemistry Hamiltonians or lattice gauge theory Hamiltonians such as the one-dimensional Schwinger model. Note that the abundance of examples reflects that the microscopic equation of motion is time-reversal symmetric unless there is spontaneous symmetry breaking or the weak interaction.

We envision that other symmetries contribute to accelerating the computation, while we leave the study of general symmetry as an open question.

\subsection{Reduction of Problem Size in Real-Amplitude States}

In the following, we argue that the stabilizer extent for real-amplitude states can be computed significantly faster than complex-amplitude ones, with a numerical demonstration for uniformly random quantum states with real amplitudes up to 10-qubit systems.
Targeting random states allows us to avoid assuming
unnecessary specificity, thereby demonstrating
the broad computational potential.

First, we define $\calS'_n \subset \calS_n$ as follows:
\begin{equation*}
    \calS'_n = \qty{\ket{\phi_j} \in \calS_n \relmiddle| \braket{i}{\phi_j} \in \bbR \text{ for all $i$}}.
\end{equation*}
This means that $\calS'_n$ is the union of $\calS_{n,0}$ and the set of states with $c=0$ in Theorem~\ref{thm:stabilizerStatesStandardForm}.
Since $\abs{\calS'_n} = 2^n \prod_{k=0}^{n-1} (2^k+1) = \frac{2}{2^n+1}\abs{\calS_n}$, the size is reduced by $\order{2^n}$.
Let $\calA'_n$ denote the corresponding subset of
the columns in $\calA_n$.
Also refer to FIG.~\ref{fig:Amat} for the definition of $\calA'_n$.
Then, the next lemma holds.
\begin{figure}[tbp]
    \centering
    \includegraphics[width=\columnwidth]{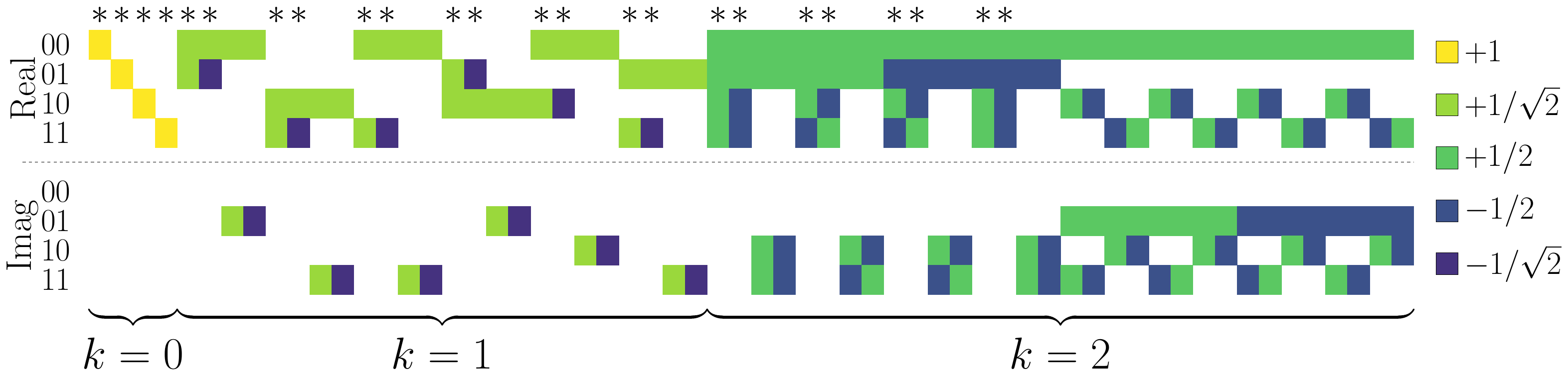}
    \caption{
        Visualization of the matrix $A_n$, i.e., the column set $\calA_n$, with $n=2$.
        The upper half corresponds to the real part, and the lower half corresponds to the imaginary part.
        The $j$-th column represents the state vector $a_j$ of the stabilizer state $\ket{\phi_j}$.
        The integer $k$ below corresponds to the integer $k$ in Theorem~\ref{thm:stabilizerStatesStandardForm}.
        By~restricting~the column set $\calA_n$ to the starred columns, which are real vectors, we can obtain $\calA'_n$.
    }
    \label{fig:Amat}
\end{figure}
\begin{lemma}{\label{lem:absRealForTIsAbsRealForS}}
    For a pure state $\ket{\psi}$, suppose \black{its vector $b \in \bbC^{2^n} \ (b_i = \braket{i}{\psi})$} is real, i.e. $b\in \bbR^{2^n}$.
    Then, we have
    \begin{equation*}
        \max_{a\in \calA_n} \abs{a^\dagger b} = \max_{a\in \calA'_n} \abs{a^\dagger b}.
    \end{equation*}
\end{lemma}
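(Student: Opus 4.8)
The inequality $\max_{a\in\calA'_n}\abs{a^\dagger b} \le \max_{a\in\calA_n}\abs{a^\dagger b}$ is immediate from $\calA'_n \subseteq \calA_n$, so the plan is to establish the reverse direction by showing that every stabilizer state can be ``rounded'' to a real one without decreasing its overlap with $\ket{\psi}$. Concretely, I would take an arbitrary $\ket{\phi} \in \calS_n$ attaining the left-hand maximum and, using Theorem~\ref{thm:stabilizerStatesStandardForm}, write it in standard form $\ket{\phi} = 2^{-k/2}\sum_x (-1)^{x^\top Q x} i^{c^\top x}\ket{Rx+t}$. If $k=0$ or $c=0$, then $\ket{\phi}$ is already real and hence lies in $\calS'_n$, so there is nothing to prove; the only case requiring work is $c \neq 0$.

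When $c \neq 0$, the linear functional $x \mapsto c^\top x$ is surjective onto $\bbF_2$, so its kernel $S_0 \defeq \{x \mid c^\top x = 0\}$ is a subspace of dimension $k-1$ and $S_1 \defeq \{x \mid c^\top x = 1\}$ is the complementary coset, each of size $2^{k-1}$. Since $b$ is real and $\overline{i^{c^\top x}}$ equals $1$ on $S_0$ and $-i$ on $S_1$, the overlap splits as $\braket{\phi}{\psi} = A - iB$ with the two real numbers $A \defeq 2^{-k/2}\sum_{x\in S_0}(-1)^{x^\top Q x} b_{Rx+t}$ and $B \defeq 2^{-k/2}\sum_{x\in S_1}(-1)^{x^\top Q x} b_{Rx+t}$; in particular $\abs{\braket{\phi}{\psi}}^2 = A^2 + B^2$.

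The key step is to recognize $A$ and $B$, up to a factor of $\sqrt{2}$, as overlaps of $\ket{\psi}$ with genuine real stabilizer states. I would set $\ket{\phi_0} \defeq 2^{-(k-1)/2}\sum_{x\in S_0}(-1)^{x^\top Q x}\ket{Rx+t}$ and analogously $\ket{\phi_1}$ over $S_1$, and check that each is a legitimate element of $\calS'_n$. Restricting the affine-linear map $x \mapsto Rx+t$ (injective since $\Rank{R}=k$) and the $\bbF_2$-quadratic form $x\mapsto x^\top Q x$ to the $(k{-}1)$-dimensional piece $S_0$ (resp.\ $S_1$) and re-parametrizing by a basis $M$ of $S_0$ turns the amplitudes into $(-1)^{x'^\top (M^\top Q M) x'}$ over the affine set $RMx'+t$, which is exactly the standard form of Theorem~\ref{thm:stabilizerStatesStandardForm} with parameter $k-1$ and phase vector $c=0$ after bringing $RM$ to reduced column echelon form and symmetrizing $M^\top Q M$ to an upper-triangular matrix. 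Both states are therefore real, and a direct computation gives $\abs{\braket{\phi_0}{\psi}} = \sqrt{2}\,\abs{A}$ and $\abs{\braket{\phi_1}{\psi}} = \sqrt{2}\,\abs{B}$.

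Finally I would combine these with the elementary bound $2\max(A^2,B^2) \ge A^2 + B^2$ to get $\max\!\left(\abs{\braket{\phi_0}{\psi}}^2,\abs{\braket{\phi_1}{\psi}}^2\right) = 2\max(A^2,B^2) \ge A^2 + B^2 = \abs{\braket{\phi}{\psi}}^2$, so at least one of $\ket{\phi_0},\ket{\phi_1} \in \calS'_n$ has overlap at least $\abs{\braket{\phi}{\psi}}$. This yields $\max_{a\in\calA_n}\abs{a^\dagger b} \le \max_{a\in\calA'_n}\abs{a^\dagger b}$ and hence equality. The main obstacle I anticipate is the bookkeeping in the third step, namely verifying rigorously that restriction to $S_0$ and $S_1$ produces states in the canonical form of Theorem~\ref{thm:stabilizerStatesStandardForm} (that the re-parametrized $R$ can be reduced to column echelon form and the quadratic form to an upper-triangular $Q$, with coset shifts of $t$ absorbed up to an irrelevant global sign); the splitting of the overlap and the final scalar inequality are otherwise routine.
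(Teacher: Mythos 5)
Your proof is correct, but it constructs different witness states from the paper's, so the two routes are worth contrasting. Both arguments begin identically: split the sum over the level sets $S_0,S_1$ of $c^\top x$ to write $\braket{\phi}{\psi}=A-iB$ with $A,B\in\bbR$. The paper then keeps the full support $\{Rx+t\}$ and simply replaces the phase $i^{c^\top x}$ by $1$ and by $(-1)^{c^\top x}$, obtaining two states $\ket{\phi_+},\ket{\phi_-}$ with overlaps $A+B$ and $A-B$; their membership in $\calS'_n$ is immediate because they are just the canonical-form states with $c=0$ and with $c$ folded into the diagonal of $Q$ (using $(-1)^{c^\top x}=(-1)^{\sum_j c_j x_j^2}$), and the elementary identity $\sqrt{A^2+B^2}\le\abs{A}+\abs{B}=\max\{\abs{A+B},\abs{A-B}\}$ finishes the proof. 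You instead restrict the support to $S_0$ and $S_1$ separately, obtaining two real states of parameter $k-1$ with overlaps $\sqrt{2}A$ and $\sqrt{2}B$, and close with $A^2+B^2\le 2\max(A^2,B^2)$. This works --- indeed your $\ket{\phi_0},\ket{\phi_1}$ are, up to normalization, $\ket{\phi_+}\pm\ket{\phi_-}$ --- but it forces you to re-derive the canonical form for the restricted states, which is precisely the step you flag as the obstacle; note in particular that for the coset $S_1$ the cross terms $x_1^\top Q M x'+x'^\top M^\top Q x_1$ produce linear phases $(-1)^{\ell^\top x'}$ that must also be absorbed into the diagonal of the new quadratic form (again via $x'_j=(x'_j)^2$), not merely a global sign. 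All of this does go through, so there is no gap, but the paper's choice of witnesses sidesteps the bookkeeping entirely and yields a two-line finish.
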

\begin{proof}
    We only show that $\max_{a\in \calA_n} \abs{a^\dagger b} \leq \max_{a\in \calA'_n} \abs{a^\dagger b}$ since $\calA_n \supset \calA'_n$.
    Fix $a \in \calA_n \setminus \calA'_n$.
    Since $a \not\in \calA'_n$, $k>0$ holds true in the form of Theorem~\ref{thm:stabilizerStatesStandardForm} and we can denote the state corresponding to $a$ as
    $\ket{\phi} = \frac{1}{2^{k/2}}\sum_{x=0}^{2^k-1} (-1)^{x^\top Q x} i^{c^\top x} \ket{Rx+t}$.
    The following two states
    \begin{equation*}
        \ket{\phi_+} \defeq \frac{1}{2^{k/2}}\sum_{x=0}^{2^k-1} (-1)^{x^\top Q x} \ket{Rx+t}, \quad
        \ket{\phi_-} \defeq \frac{1}{2^{k/2}}\sum_{x=0}^{2^k-1} (-1)^{x^\top Q x + c^\top x} \ket{Rx+t}
    \end{equation*}
    have only real amplitudes and belong to $\calS'_n$.
    Define $a_+$ and $a_-$ as the column vectors in $\calA'_n$ of $\ket{\phi_+}$ and $\ket{\phi_-}$, respectively.
    Denote $a^\dagger b = \alpha + i\beta \; (\alpha, \beta \in \bbR)$.
    Since $c\in \bbF_2^k$, we have $a_+^\dagger b = \alpha+\beta, a_-^\dagger b = \alpha-\beta$, and thus
    \begin{equation*}
        \abs{a^\dagger b}
        =\sqrt{\alpha^2 + \beta^2}
        \leq \abs{\alpha}+\abs{\beta}
        = \max\qty(\abs{\alpha+\beta}, \abs{\alpha-\beta})
        = \max\qty(a_+^\dagger b, a_-^\dagger b)
        \leq \max_{a \in \calA'_n} \abs{a^\dagger b}.
    \end{equation*}
    This completes the proof.
\end{proof}
We can derive the following corollary by combining Lemma~\ref{lem:absRealForTIsAbsRealForS} and Theorem~\ref{thm:complexityStabilizerOverlap}.
\begin{corollary}
    The stabilizer fidelity of an $n$-qubit state $\ket{\psi}$ with real amplitudes can be computed in time complexity of $\order{\abs{S_n}/2^n}$ and space complexity of $\order{2^n}$.
\end{corollary}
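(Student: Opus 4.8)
The plan is to combine the search-space reduction of Lemma~\ref{lem:absRealForTIsAbsRealForS} with the recursive algorithm behind Theorem~\ref{thm:complexityStabilizerOverlap}. Because $b$ is real, the lemma gives $\max_{a \in \calA_n}\abs{a^\dagger b} = \max_{a \in \calA'_n}\abs{a^\dagger b}$, so it suffices to maximize over the restricted set $\calA'_n$. Since $\calA'_n$ corresponds to $\calS_{n,0}$ together with the states of Theorem~\ref{thm:stabilizerStatesStandardForm} for which $c = 0$, the entire computation amounts to running the stabilizer pruning procedure of Theorem~\ref{thm:overlapProblem} with the phase vector frozen at $c = 0$.

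Concretely, I would rerun the recursion~\eqref{eq:dfsRecursion} under the constraint $c = 0$, so that $c_0 = 0$ (and $\overline{c} = 0$) at every level. This removes the branching over $c_0$, reducing the number of $(Q_{00}, Q_0, c_0)$ choices per level from $2^{n+1}$ to $2^n$, and the time recurrence becomes $T'(n) = 2^n\qty(n 2^{n-1} + T'(n-1))$ with $T'(1) = 2$. Solving this exactly as in Theorem~\ref{thm:overlapProblem} yields $T'(n) = \order{2^{n(n+1)/2}}$ for the block $k = n,\, R = I_n,\, t = 0$; reapplying the argument across all $k$, $R$, and $t$ then bounds the total running time by a constant times $\abs{\calS'_n}$.

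It remains to verify $\abs{\calS'_n} = \order{\abs{\calS_n}/2^n}$. From the enumeration in Theorem~\ref{thm:stabilizerStatesStandardForm}, imposing $c = 0$ deletes the factor $\abs{\bbF_2^k} = 2^k$ from the count of $\calS_{n,k}$, so $\abs{\calS'_{n,k}} = \abs{\calS_{n,k}}/2^k$. I would then show that both $\abs{\calS_n}$ and $\abs{\calS'_n}$ are dominated by their $k = n$ terms: using $\abs{\calQ_k} = 2^{k(k+1)/2}$, $\abs{\calT_R} = 2^{n-k}$, and the estimate $\qBinom{n}{k} = \Theta\qty(2^{k(n-k)})$ for the Gaussian binomial, the exponent of the $k$-th summand is increasing in $k$ on $\{0,\dots,n\}$ in both cases. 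Since the reduction factor at the dominant index $k = n$ is exactly $2^n$, this gives $\abs{\calS'_n} = \Theta\qty(\abs{\calS_n}/2^n)$ and hence the stated time complexity.

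The space bound is inherited without change: the restricted recursion reuses the same in-place array of $\order{2^n}$ cells described in Theorem~\ref{thm:overlapProblem}, and freezing $c = 0$ only shrinks the branching, never the storage. I expect the one genuinely delicate point to be the domination argument---confirming that the single top block $k = n$ controls both sums, so that its clean $1/2^n$ ratio is not washed out by the lower-$k$ terms, whose milder $1/2^k$ reductions would otherwise spoil the bound.
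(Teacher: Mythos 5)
Your proposal is correct and takes essentially the same route the paper intends: the corollary is obtained by combining Lemma~\ref{lem:absRealForTIsAbsRealForS} (restriction of the maximization to the real column set $\calA'_n$) with the pruning algorithm of Theorems~\ref{thm:overlapProblem} and~\ref{thm:complexityStabilizerOverlap}, and your explicit $c=0$ recurrence together with the count $\abs{\calS'_{n,k}}=\abs{\calS_{n,k}}/2^k$ merely fleshes out details the paper leaves implicit. The domination step you flag as delicate can in fact be bypassed: the same q-binomial identity used in Appendix~\ref{app:enumeration} gives the exact ratio $\abs{\calS'_n}/\abs{\calS_n}=2/(2^n+1)$, which immediately yields the $\order{\abs{\calS_n}/2^n}$ bound.
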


Now, we can prove the following theorem by Lemma~\ref{lem:absRealForTIsAbsRealForS}.
\begin{theorem}
    \label{thm:restrictedRealProblem}
    Suppose that $\ket{\psi}$ is a state
    with real amplitudes.
    The optimal solution for the restricted problem~\eqref{eq:stabilizerExtentDual} where $\calA_n$ is substituted by $\calA'_n$
    is also optimal for the original problem~\eqref{eq:stabilizerExtentDual}.
\end{theorem}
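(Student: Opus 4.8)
The plan is to exploit the fact that, because $b$ is real, the dual objective satisfies $\Re(b^\dagger y) = b^\top \Re(y)$ and therefore depends only on the real part of $y$. This lets me reduce to real dual vectors and then reuse Lemma~\ref{lem:absRealForTIsAbsRealForS} to collapse the full constraint set $\calA_n$ onto $\calA'_n$. First I would record the easy direction: since $\calA'_n \subseteq \calA_n$, the restricted version of \eqref{eq:stabilizerExtentDual} has a weakly larger feasible region than the original, so its optimal value is at least that of the original. The task is then to produce, from a restricted optimizer, a point that is feasible and optimal for the original problem.

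Next, let $y^\star$ be optimal for the restricted problem and set $y_r \defeq \Re(y^\star)$. Because $b$ is real, $\Re(b^\dagger y_r) = b^\top \Re(y_r) = b^\top \Re(y^\star) = \Re(b^\dagger y^\star)$, so taking the real part preserves the objective value. Moreover, each $a \in \calA'_n$ is a real vector, so $a^\dagger = a^\top$ and hence $a^\dagger y_r = \Re(a^\dagger y^\star)$, giving $\abs{a^\dagger y_r} \le \abs{a^\dagger y^\star} \le 1$. Thus $y_r$ is again feasible for the restricted problem and attains the restricted optimum; in other words, we may assume without loss of generality that the restricted optimizer is real.

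The crucial step is to verify that this real vector $y_r$ satisfies every original constraint $\abs{a^\dagger y_r} \le 1$ for $a \in \calA_n$. Here I would invoke Lemma~\ref{lem:absRealForTIsAbsRealForS} with $y_r$ playing the role of the real data vector: since $y_r \in \bbR^{2^n}$, the lemma yields $\max_{a \in \calA_n} \abs{a^\dagger y_r} = \max_{a \in \calA'_n} \abs{a^\dagger y_r} \le 1$, where the final inequality is exactly the feasibility of $y_r$ over $\calA'_n$ established above. Consequently $y_r$ is feasible for the original problem. Combining this with objective-preservation and the trivial inequality between the two optimal values forces them to coincide, and identifies $y_r$ as an optimizer of the original \eqref{eq:stabilizerExtentDual}.

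The main obstacle, conceptually, is recognizing that Lemma~\ref{lem:absRealForTIsAbsRealForS} is really a statement about an \emph{arbitrary} real vector rather than about the specific state $b$, so that it can be applied a second time with the dual variable $y_r$ in the data slot. This reuse only becomes available after the real-part reduction $y^\star \mapsto \Re(y^\star)$, which is what simultaneously preserves the objective (using that $b$ is real) and keeps feasibility over the real columns $\calA'_n$. Once these two observations are in place, the duality argument is routine.
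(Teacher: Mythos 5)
Your proof is correct, but it takes a genuinely different route from the paper's. The paper argues through the primal: it zero-pads the restricted primal optimizer $x^*$ to a feasible point $x^{**}$ of the original primal, invokes Lemma~\ref{lem:absRealForTIsAbsRealForS} to assert that the restricted dual optimizer $y^*$ is feasible for the original dual, and then sandwiches the optimal value using strong duality twice. Your argument stays entirely on the dual side and never needs strong duality: you reduce to a real dual optimizer $y_r = \Re(y^\star)$ (using that $b$ and every column of $\calA'_n$ are real, so the objective and the restricted constraints only see $\Re(y)$), then apply the lemma with $y_r$ in the data slot to upgrade restricted feasibility to original feasibility, and close with the trivial inequality between the two optimal values. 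Notably, your explicit real-part reduction repairs a step the paper leaves implicit: the lemma's proof genuinely requires the vector it is applied to be real (the identity $a_\pm^\dagger y = \alpha \pm \beta$ breaks for complex $y$), so the paper's application of it to an arbitrary restricted dual optimizer $y^*$ is only valid after the same without-loss-of-generality reduction you perform. You also correctly flag that the lemma is scale- and normalization-independent, so applying it to the unnormalized $y_r$ is legitimate. What your route does not deliver, and the paper's does, is the primal-side byproduct that the zero-padded $x^{**}$ is an optimal decomposition for the original problem \eqref{eq:stabilizerExtentPrimal}, which is the object actually used algorithmically; if you want that conclusion you would still need one application of strong duality at the end.
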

\begin{proof}
    Let $x^*$ and $y^*$ be the optimal solutions
    of the restricted primal and dual problems,
    namely, problems~\eqref{eq:stabilizerExtentPrimal}
    and~\eqref{eq:stabilizerExtentDual}
    with $\calA'_n$ instead of $\calA_n$.
    We can ensure the solutions always exist.
    Define~$x^{**} \in \bbC^{\abs{\calS_n}}$~as
    \begin{equation*}
        x^{**}_j=\begin{cases}
            x^*_j & \text{if $a_j \in \calA'_n$,}    \\
            0     & \text{if $a_j \notin \calA'_n$.}
        \end{cases}
    \end{equation*}
    We will show that $x^{**}$ and $y^{*}$ are optimal solutions for the original problems.

    Let $\mathrm{OPT}$ be the optimal value for the original problems.
    Since $x^{**}$ is a feasible solution for the original primal problem,
    it is clear that $\mathrm{OPT} \leq \norm{x^{**}}_1=\norm{x^*}_1$.
    By the strong duality theorem,
    $\mathrm{OPT}$ is also the optimal value
    for the original dual problem.
    From Lemma~\ref{lem:absRealForTIsAbsRealForS}, we know that $y^*$ is a feasible solution not only for the restricted dual problem but also for the original dual problem, hence $\mathrm{OPT} \geq \Re(b^\dagger y^*)$.
    Again, by applying the strong duality theorem
    to the restricted problems,
    we obtain $\norm{x^*}_1 = \Re(b^\dagger y^*)$,
    which implies that $\mathrm{OPT} = \norm{x^{**}}_1 = \Re(b^\dagger y^*)$.
    Therefore, $x^{**}$ and $y^*$ are optimal solutions
    for the original problems.
\end{proof}

Thanks to Theorem~\ref{thm:restrictedRealProblem}, we can reduce the column set size by $\order{2^n}$.
We also conducted a numerical experiment for a uniformly random 10-qubit state with real amplitudes.
The algorithm converged within a single iteration using $\abs{\calC_0}=100,000$ as in Section~\ref{subsec:initialization}, which was sufficiently large to obtain the stabilizer extent.
This result confirmed our success in computing its stabilizer extent, accomplished within a time frame of 4.7 hours.

\section{Discussion}
\label{sec:discussion}

In this paper, we have shown that the stabilizer fidelity and the stabilizer extent can be calculated by utilizing the specific structure of stabilizer states.
\black{One of the significant advancements is the proposal of a novel canonical form of stabilizer states, which allowed us to compute the stabilizer fidelity with time complexity of $\mathcal{O}(|S_n|)$. The central technique is the \textit{stabilizer pruning}, which skips a significant amount of calculation based on the branch structure enforced by the canonical form. Note that the complexity achieves exponential improvement per a stabilizer state compared to naive calculation and linear improvement per a stabilizer state compared to the state-of-the-art in Ref.~\cite{Hamaguchi2024handbookquantifying}.
    With the faster calculation of stabilizer fidelity, we have employed the Column Generation method to effectively choose stabilizer states that contribute to the solution of the optimization problem.
    We have numerically demonstrated the validity of our proposal up to sufficiently large systems of $n=9$ qubits. Furthermore, we have proposed a specialized algorithm for states with real amplitudes, which computes the exact value of stabilizer extent up to $n=10$ qubits in a few hours.}

A wealth of future work can be envisioned. First, it is intriguing to see whether the methodology proposed in this work generalizes to other monotones, such as the dyadic negativity~\cite{seddonQuantifyingQuantumSpeedups2021}.
Second, it is crucial to incorporate the effect of symmetry present in the target quantum state.
In particular, it is known for the computation of the RoM that permutation symmetry of quantum states reduces the size of  $\mathcal{A}_n$ super-polynomially, allowing the computation to scale up significantly larger up to 26 qubits~\cite{heinrichRobustnessMagicSymmetries2019}. Such a reduction can be expected for the case of stabilizer extent as well, while it is nontrivial how to speed up further using the techniques such as stabilizer pruning.
Third, faster evaluation of magic resources on hardware is a viable direction~\cite{oliviero2022measuring, haug2023scalable}. While recent work has shown that it is intractable to discriminate an ensemble of states with an extensive amount of nonstabilizerness from that with logarithmic amount~\cite{gu2024pseudomagic}, it is interesting to develop a method to assess whether a specific state is non-magic, i.e., constant amount of nonstabilizerness, or not.

\section*{Acknowledgments}

We thank Shigeo Hakkaku, Bartosz Regula, and Ryuji Takagi for helpful discussions.
N.Y. wishes to thank JST PRESTO for Grant No.\ JPMJPR2119 and is grateful for support from IBM Quantum. This work was supported by JST Grant No.\ JPMJPF2221, JST ERATO Grant No.\ JPMJER2302 and JST CREST Grant No.\ JPMJCR23I4, Japan.

\bibliography{stabilizerExtent,stabilizerExtent2}

\appendix
\section{Appendix for Stabilizer Pruning}

\subsection{Enumeration of Stabilizer States}
\label{app:enumeration}

In this section, we prove Theorem~\ref{thm:stabilizerStatesStandardForm}.
\stabilizerStatesStandardForm*
\begin{proof}
    The main idea comes from Ref.~\cite{struchalinExperimentalEstimationQuantum2021b}.
    From previous work \cite[Theorem 2]{struchalinExperimentalEstimationQuantum2021b}\cite[Section 5]{nestClassicalSimulationQuantum2010}\cite[Theorem 5.(ii)]{dehaeneCliffordGroupStabilizer2003}, we know that any state in $\bigcup_{k=0}^{n} \calS_{n,k}$ is a stabilizer state. Thus, we can construct an inclusion map from $\bigcup_{k=0}^{n} \calS_{n,k}$ to $\calS_n$.
    In this proof, we will show that this map is bijective, which means this map is an identity mapping.
    The assertion is trivial for the case $k=0$ with $2^n$ instances.
    We will only consider the case $k>0$.
    Define $f\colon (Q,c,R,t) \mapsto \ket{\phi}$
    as a map from $(Q,c,R,t)$ to
    the corresponding stabilizer state $\ket{\phi}$.
    We will confirm that $f$ is bijective.
    First, we show that $f$ is injective.
    We can say that
    \begin{align*}
             & \left\{R_1 x + t_1  \relmiddle| x \in \bbF_2^{k} \right\} = \left\{R_2 x + t_2  \relmiddle| x \in \bbF_2^{k} \right\} \\
        \iff & \qty(\Im(R_1)=\Im(R_2)) \land \qty(t_1-t_2 \in \Im(R_1))                                                              \\
        \iff & R_1 = R_2 \land t_1 = t_2.
    \end{align*}
    The last equivalence is due to the
    property of the reduced column echelon form and the quotient space.
    Given that $Q$ is an upper triangular matrix,
    both $Q$ and $c$ can be uniquely reconstructed from the amplitude of the state. Consequently, the function $f$ is injective.

    Second, we show that $f$ is surjective.
    Since $f$ is injective, we only have to show that the cardinality of the domain is equal to that of the codomain, i.e., $-2^n+\abs{\calS_n}$.
    It is known that the number of $\bbF_2^{n \times k}$
    reduced column echelon form matrices $R$
    with $\Rank{R}=k$ is $\qBinom{n}{k}_2$,
    which is a $q$-binomial coefficient with $q=2$~\cite[Theorem 7.1]{kacQuantumCalculus2002}.
    Therefore, the number of $Q,c,R,t$ is
    $2^{k(k+1)/2},2^k,\qBinom{n}{k}_2,2^{n-k}$, respectively.
    The total number of states is
    \begin{equation*}
        \sum_{k=1}^{n} 2^{k(k+1)/2} 2^k \qBinom{n}{k}_2 2^{n-k} \\
        = -2^n + 2^n \sum_{k=0}^{n} \qBinom{n}{k}_2 2^{k(k+1)/2}               \\
        = -2^n + 2^n \prod_{k=1}^{n} (2^k+1)
        = -2^n + \abs{\calS_n}.
    \end{equation*}
    In the second to last equation, we used the $q$-binomial theorem.
    Therefore, the mapping is surjective, which completes the proof.
\end{proof}

\subsection{Pruning for the Branch-and-Bound Method}
\label{sec:Pruning}

In Section~\ref{sec:coreSubroutine}, we explained the branching in the branch-and-bound method.
This algorithm can be much faster
by the bounding introduced in this section.
First, recall that we are maximizing the following:
\begin{equation*}
    \max_{Q,c} \qty{ \abs{\sum_{x=0}^{2^n-1} (-1)^{x^\top Q x} i^{c^\top x} P_x} }.
\end{equation*}
This can be easily bounded by
\begin{equation*}
    \max_{Q,c} \qty{ \abs{\sum_{x=0}^{2^n-1} (-1)^{x^\top Q x} i^{c^\top x} P_x} }
    \leq \max_{Q,c} \qty{ \sum_{x=0}^{2^n-1} \abs{(-1)^{x^\top Q x} i^{c^\top x} P_x} }
    = \sum_{x=0}^{2^n-1} \abs{P_x}.
\end{equation*}
Such a bound is crucial for the branch-and-bound method,
because it allows us to terminate the branch if the bound is inferior to the current best value.
However, the bound can be more refined.
Since each coefficient takes only
$1, -1, i$ or $-i$, we can bound as
\begin{equation}\label{eq:branchCutNewProblemDefinition}
    \max_{Q,c} \qty{ \abs{\sum_{x=0}^{2^n-1} (-1)^{x^\top Q x} i^{c^\top x} P_x} }
    \leq \max_{C} \abs{\sum_{x=0}^{2^n-1} i^{c_x} P_x} ,
\end{equation}
where $C=(c_x)_{x=0}^{2^n-1}$ takes values independently from the set $\{0, 1, 2, 3\}$.
Let $P_C \defeq \sum_{x=0}^{2^n-1} i^{c_x} P_x$,
and define $\theta_x \defeq \arg(i^{c_x}P_x)$.
The right-hand side of \eqref{eq:branchCutNewProblemDefinition} is equal to
\begin{equation}\label{eq:absPIsSumOfPxCos}
    \max_{C} \abs{P_C}
    = \max_{C, \theta} \qty{ \langle P_C, e^{i \theta} \rangle }
    = \max_{C, \theta} \qty{\sum_{x=0}^{2^n-1} \langle i^{c_x} P_x, e^{i \theta} \rangle}
    = \max_{C, \theta} \qty{\sum_{x=0}^{2^n-1} \abs{P_x} \cos(\theta_x -\theta)}
\end{equation}
where $\langle \cdot, \cdot \rangle$ denotes the inner product of complex values.
Then, we can confirm that Algorithm \ref{alg:branchCut} is
certain to return the optimal solution for~\eqref{eq:absPIsSumOfPxCos} as follows.
If we fix the value of $\theta$ in \eqref{eq:absPIsSumOfPxCos}, the optimal values of $C$ can be determined so that $\cos(\theta_x - \theta)$ is maximized, i.e., $\theta_x \in [\theta - \pi / 4, \theta + \pi / 4)$.
Thus, instead of trying all $\theta$, we run Algorithm \ref{alg:branchCut} to cover all the possible optimal solutions of $C$, which is sufficient to calculate $\max_{C} \abs{P_C}$.

\begin{algorithm}[ht]
    \caption{Bounding for the Branch-and-Bound Method}
    \label{alg:branchCut}
    \KwIn{Coefficients $P_x$ for $x=0,1,\dots,2^n-1$}
    \KwOut{The answer for problem~\eqref{eq:absPIsSumOfPxCos}}
    Take $C=(c_x)_{x=0}^{2^n-1}$ that satisfies $\theta_x
        \in [0, \pi / 2)$.\;
    Sort and relabel $P_x$ so that $0 \leq \theta_0 \leq \theta_1 \leq \dots \leq \theta_{2^n-1} < \pi/2$.\;
    $\mathrm{ans} \leftarrow 0$\;
    \For{$x \leftarrow 0$ \KwTo $2^n - 1$}{
        $\mathrm{ans} \leftarrow \max\qty(\mathrm{ans}, \abs{\sum_{x=0}^{2^n-1} i^{c_x} P_x})$\;
        $c_x \leftarrow c_x + 1 \pmod 4$
    }
    \KwRet{$\mathrm{ans}$}
\end{algorithm}

\begin{figure}[t]
    \centering
    \includegraphics[width=\columnwidth]{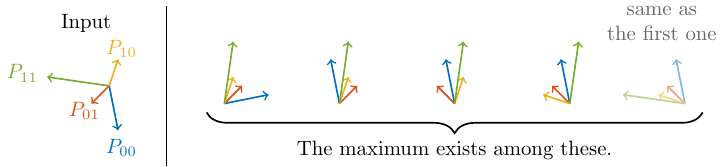}
    \caption{
        Visualization of Algorithm~\ref{alg:branchCut}.
        Suppose that $n=2$ and $P_x$ are represented as vectors in the complex plane
        (e.g., $P_{00} = 1-5i$) on the left side.
        Iterating the loop in Algorithm~\ref{alg:branchCut}
        yields $2^n$ patterns of the coefficients $c_x$,
        as depicted on the right side.
        The maximum of problem~\eqref{eq:branchCutNewProblemDefinition}
        exists among these $2^n$ patterns.
    }
    \label{fig:argsort}
\end{figure}

Refer to FIG.~\ref{fig:argsort} for a visual representation of this algorithm.
The time complexity of this approach is $\order{n2^n}$ owing to the sorting of $2^n$ elements.

As the end of this section, we evaluate the performance of this bound.
We can obtain the lower bound of \eqref{eq:absPIsSumOfPxCos} by taking the expected value with respect to $\theta$ as follows:
\begin{align*}
    \max_{C} \abs{P_C}
     & = \max_{C, \theta} \qty{\sum_{x=0}^{2^n-1} \abs{P_x} \cos(\theta_x -\theta)}             \\
     & \ge \bbE \qty[\max_{C} \qty{\sum_{x=0}^{2^n-1} \abs{P_x} \cos(\theta_x -\theta)}]        \\
     & = \sum_{x=0}^{2^n-1} \abs{P_x} \cdot \bbE \qty[\max_{c_x} \qty{\cos(\theta_x -\theta)}].
\end{align*}
Here, we assume $\theta$ is drawn from the uniform distribution over $[0, 2 \pi)$. Then, we can replace each term $\bbE \qty[\max_{c_x} \qty{\cos(\theta_x -\theta)}]$ with $\bbE\qty[\cos(\theta'_x)]$
where $\theta'_x$ follows the uniform distribution over the interval $[-\pi/4, +\pi/4)$.
Then, we can derive that
\begin{equation*}
    \frac{\max_{C} \abs{P_C}}{\sum_{x=0}^{2^n-1} \abs{P_x}}
    \ge
    \frac{\sum_{x=0}^{2^n-1} \abs{P_x} \cdot \bbE\qty[\cos(\theta'_x)]}{\sum_{x=0}^{2^n-1} \abs{P_x}}
    =\frac{\int_{-\frac{\pi}{4}}^{+\frac{\pi}{4}} \cos(\theta) \dd{\theta}}{\pi/2}
    = \frac{2\sqrt{2}}{\pi}=0.900316\cdots.
\end{equation*}
The result of a numerical experiment suggests that this lower bound serves as a rough approximation of the ratio.
We independently sampled $P_x$ from the standard normal distribution and $\theta_x$ from the uniform distribution over $[0,2\pi)$.
The numerical experiment results obtained are as follows.
First, the average of
\begin{equation*}
    \frac{\max_{C} \abs{P_C}}{\sum_{x=0}^{2^n-1} \abs{P_x}}
    = \frac{\max_{C} \qty{ \abs{\sum_{x=0}^{2^n-1} i^{c_x} P_x} }}{\sum_{x=0}^{2^n-1} \abs{P_x}}
\end{equation*}
over 100 runs yielded 0.935624 for $n=4$.
This confirms that Algorithm~\ref{alg:branchCut} provides
a better upper bound compared to $\sum_{x=0}^{2^n-1} \abs{P_x}$.
However, in the same setting, it turned out that the average of
\begin{equation*}
    \frac{\max_{Q,c} \qty{ \abs{\sum_{x=0}^{2^n-1} (-1)^{x^\top Q x} i^{c^\top x} P_x} }}{\sum_{x=0}^{2^n-1} \abs{P_x}}
\end{equation*}
yielded 0.824056, implying the
bound~\eqref{eq:branchCutNewProblemDefinition}
may not necessarily be optimal.
Whether a better bound can be obtained
with lower computational cost is
left as an open problem.

\end{document}